\newtheorem{thm}{Theorem}[section]
\newtheorem{corollary}[thm]{Corollary}
\newtheorem{lemma}[thm]{Lemma}
\newtheorem{remark}{Remark}
\begin{document}

\title{Stable Throughput and Delay Analysis of a Random Access Network With Queue-Aware Transmission}
\author{\IEEEauthorblockN{Ioannis Dimitriou, Nikolaos Pappas~\IEEEmembership{Member,~IEEE}
\thanks{I. Dimitriou is with the Department of Mathematics, University of Patras, Patra, Peloponnese, Greece.
(e-mail: idimit@math.upatras.gr).}
\thanks{N. Pappas is with the Department of Science and Technology, Link\"{o}ping University, Norrk\"{o}ping SE-60174, Sweden. (e-mail: nikolaos.pappas@liu.se).}
}}

\maketitle

\begin{abstract}
In this work we consider a two-user and a three-user slotted ALOHA network with multi-packet reception (MPR) capabilities. The nodes can adapt their transmission probabilities and their transmission parameters based on the status of the other nodes. Each user has external bursty arrivals that are stored in their infinite capacity queues. For the two- and the three-user cases we obtain the stability region of the system. For the two-user case we provide the conditions where the stability region is a convex set. We perform a detailed mathematical analysis in order to study the queueing delay by formulating two boundary value problems (a Dirichlet and a Riemann-Hilbert boundary value problem), the solution of which provides the generating function of the joint stationary probability distribution of the queue size at user nodes. Furthermore, for the two-user symmetric case with MPR we obtain a lower and an upper bound for the average delay without explicitly computing the generating function for the stationary joint queue length distribution. The bounds as it is seen in the numerical results appear to be tight. Explicit expressions for the average delay are obtained for the symmetrical model with capture effect which is a subclass of MPR models. We also provide the optimal transmission probability in closed form expression that minimizes the average delay in the symmetric capture case. Finally, we evaluate numerically the presented theoretical results.
\end{abstract}

\begin{IEEEkeywords}
Boundary Value Problem, Stable Throughput Region, Delay Analysis, Random Access
\end{IEEEkeywords}

\section{Introduction}

The ALOHA protocol since its creation \cite{Abramson} has gained popularity in multiple access communication systems for its simple nature and the fact that it does not require centralized controllers. This simple scheme attempts transmission randomly, independently, distributively, and based on a simple ACK/NACK feedback from the receiver.

Random access recently re-gained interest due to the increase in the number of communicating devices in 5G networks, more specifically, because of the need of massive uncoordinated access in large networks \cite{Laya2014, METISComMag2014}. Random access and alternatives and their effect on the operation of LTE and LTE-A are presented in \cite{Laya2014,KoseogluTCOM2016,LeungTWC2012}. Recently, the effect of random access in physical layer and in other topics has been studied \cite{PopovskiSPL2017,WangTSP2007,PopovskiWCL2015,PopovskiTCOM2013} and the research in this area is in progress.
Random access remains an active research area where a lot of fundamental questions remain open even for very simple networks \cite{AEUnion,TongSingProcMag2004}.

When the traffic in a network is bursty, a relevant performance measure is the stable throughput or stability region. The exact characterization of the stability region is known to be a difficult problem due to the interaction among the queues. Except the throughput, delay is another important metric. Recently there is a rapid growth on supporting real-time applications thus, there is a need to provide delay-based guarantees \cite{METISComMag2014, KumarTMC2016}. Thus, the characterization of the delay is of major importance. However, the exact characterization of delay even in small networks with random access is rather difficult and remains unexplored in most of the cases.
 
In this work, we consider a two-user and a three-user slotted ALOHA network with multi-packet reception capabilities. Furthermore, the nodes can adapt their transmission probabilities and their transmission parameters based on the status of the other node. We analyze the stable throughput region and study the queueing delay by utilizing the theory of boundary value problems. 
 
\subsection{Related Work}

In the literature so far there is a vast number of papers that are considering the stable throughput and delay in random access and variations of random access schemes.

The derivation of the stability region of random access systems for bursty sources is known to be a difficult problem above three sources. This is because each source transmits and interferes with the others only when its queue is non-empty. Such queues where the service process of one depends on the status of the others are said to be coupled or interacting. Thus, the individual departure rates of the queues cannot be computed separately without knowing the stationary distribution of the joint queue length process \cite{Rao_TIT1988}. This is the reason why the vast majority of previous works has focused on small-sized networks and only bounds or approximations are known for the networks with larger number of sources \cite{Tsybakov79, Rao_TIT1988, szpa1, LuoAE1999, NawareTong2005, WangTSP2007}. In \cite{Bordenave_TIT2012}, an approximation of the stability region was obtained based on the mean-field theory for network of nodes having identical arrival rates and transmission probabilities were performed. The work in \cite{Haenggi_TCOM2016} investigates the stable throughput region of a random access network where the transmitters and receivers are distributed by a static Poisson bipolar process.

Delay analysis of random access networks was studied in \cite{NawareTong2005, BehrooziRaoTIT1992, GeorgiadisJSAC87}. More specifically, in \cite{NawareTong2005,WangTSP2007} a two-user network with MPR capabilities was considered and expressions for the average delay were obtained for the symmetric case. The papers \cite{BehrooziRaoTIT1992, GeorgiadisJSAC87} considered collision channel model. In \cite{StamatiouHaenggi2010} the delay performance of slotted ALOHA in a Poisson network was studied. Delay analysis of random access networks based on fluid models can be found in \cite{WangTong2010} and in \cite{ProutiereITC2011}. The works \cite{HaenggiAllerton2010} and \cite{HaenggiTMC2012} utilized techniques from statistical mechanics for throughput and delay analysis. The authors in \cite{PoloczekInfocom2015} proposed a service-martingale concept that enables the queueing analysis of a bursty source sharing a MAC channel.

Below we present a recent set of papers that consider throughput and/or delay characterization of general random access networks. The work in \cite{PappasTWC2015} studied the impact of a full duplex relay in terms of throughput and delay in a multi-user network, where the users were assumed to have saturated traffic. The delay of a random access scheme in the Internet of Things concept was studied in \cite{TanJIOT2017}. In \cite{ChenWoWMoM2016} throughput with delay constraints was studied in a shared access cognitive network.
The delay characterization of larger networks was considered in \cite{StamatiouTON2014,QuekTWC2016}. In \cite{SandgrenTCOM2017} the delay and the packet loss rate of a frame asynchronous coded slotted ALOHA system for an uncoordinated multiple access were also studied.

\subsection{Contribution}
Our contribution in this work can be summarized as follows. We consider the case of the two and three-user wireless network with a common destination. The nodes/sources access the medium in a random access manner and time is assumed to be slotted. Each user has external bursty arrivals that are stored in their infinite capacity queues. We consider multi packet reception (MPR) capabilities at the destination node.

The nodes are accessing the wireless channel randomly and they adapt their transmission probabilities based on the status of the queue of the other nodes. More precisely, a node adapt its transmission characteristics based on the status of the other node in order to exploits its idle slots and to increase the chances of a successful packet transmission. To the best of our knowledge this variation of random access has not been reported in the literature. The contribution of this work has two main parts focused on the stable throughput region and the detailed analysis of the queueing delay at users nodes.

\subsubsection{Stable Throughput Region Analysis}

The first part is related to the study of stable throughput.
\begin{itemize}
\item More specifically, we obtain the stability conditions for the case of two and three users. To the best of our knowledge, there is no other work in the related literature that deals with the stability region of a random access system with adaptive transmission control. Furthermore, we obtain the conditions where the stability region is a convex set. Convexity is an important property since it corresponds to the case when parallel concurrent transmissions are preferable to a time-sharing scheme.
\item We would like also to emphasize that the exact stability region for the case of three nodes with MPR even in the simple random access (without transmission control) case is not known in the literature, except for the case of a collision channel model \cite{szpa}.
\end{itemize}
  
The main difficulty for characterizing the stability region lies on the interaction of the queues. The interaction of the queues arise when the service rate of a queue depends on the state of the other. A tool to bypass this difficulty is the stochastic dominance technique introduced in \cite{Rao_TIT1988}.
However, the three-user network is more elaborated and the stability region cannot be derived that easily. As mentioned also earlier, in the literature the three-user scenario has studied only for the collision channel model.

\subsubsection{Delay Analysis}
The second part of the contribution of this work is the delay analysis.
\begin{itemize}
\item Based on a relation among the values of the transmission probabilities we distinguish the analysis in two cases, which are different in the modeling and the technical point of view. In particular, the analysis leads to the formulation of two boundary value problems (e.g., \cite{coh,fay,nain,fay1,avr}), the solution of which will provide the joint probability distribution of the queue size for the two-user case with MPR. The analysis is rather complicated and novel.
\item Furthermore, for the two-user symmetric case with MPR we obtain a lower and an upper bound for the average delay without explicitly computing the generating function for the stationary joint queue length distribution.
\item The bounds as it is seen in the numerical results appear to be tight. Explicit expressions for the average delay are obtained for the model with capture effect, i.e., a subclass of MPR models.
\item We also provide the optimal transmission probability in closed form expression that minimizes the average delay in the symmetric capture case.
\end{itemize}

Concluding, the analytical results in this work, to the best of our knowledge, have not been reported in the literature.

The rest of the paper is organized as follows. In Section \ref{sec:model} we present the system model by providing the details of the proposed protocol and the underlying physical layer details on the channel model. In Section \ref{sec:stability2} we provide the stability region for the two-user case for the proposed random access scheme. In Section \ref{sec:analysis} we derive the fundamental functional equation and obtain some important results for the following analysis. Section \ref{sec:bound} is devoted to the formulation of two boundary value problems, the solution of which provides the generating function of the joint stationary queue length distribution of user nodes. The expected number of packets and the average delay expressions are also obtained. In Section \ref{sec:stabb}, we provide an alternative approach to obtain the stability conditions and we also obtain the exact expressions for the stability region for the case of three users. In Section \ref{sec:symmetric2user}, we obtain explicit expressions for the average delay at each user for the symmetrical system. Finally, numerical examples that provide insights in the system performance are given in Section \ref{sec:num}.

\section{System Model} \label{sec:model}

\subsection{Network Model}
We consider a slotted random access system consisting of $N=2,3$ users communicating with a common receiver. Each user has an infinite capacity buffer, in which stores arriving and backlogged packets. Packets have equal length and the time is divided into slots corresponding to the transmission time of a packet. Let $\{A_{k,n}\}_{n\in\mathbf{N}^{*}}$ be a sequence of independent and identically distributed random variables where $A_{k,n}$ is the number of packets arriving in user node $k$, $k=1,2$, in the time interval $(n,n+1]$, with $E(A_{k,n})=\lambda_{k}<\infty$, $k=1,2$. Denote also by $D(x,y)=\lim_{n\to\infty}E(x^{A_{1,n}}y^{A_{2,n}})$, $|x|\leq1$, $|y|\leq1$, $n\in\mathbf{N}^{*}$, the generating function of the joint distribution of the number of arrivals in any slot.

At the beginning of each slot, there is a probability for the node $k$, $k=1,2$, to transmit a packet to the receiver. More than one concurrent transmission can occur without having a collision.

Due to the interference and the complex interdependence among the nodes we consider the following queue-aware transmission policy: If both nodes are non empty, node $k$, $k=1,2,$ transmits a packet with probability $\alpha_{k}$ independently, where $\bar{\alpha}_{k}=1-\alpha_{k}$ is the probability that node $k$ does not make a transmission in a slot when its queue is not empty. If node 1 (resp. 2) is the only non-empty, it transmits a packet with probability $\alpha_{k}^{*}$, where $\bar{\alpha}_{k}^{*}=1-\alpha_{k}^{*}$ is the probability that node $k$ does not make a transmission in the given slot. \footnote{We consider the general case for $\alpha_{k}^{*}$ instead of assuming directly $\alpha_{k}^{*}=1$. This can handle cases where the node cannot transmit with probability one even if the other node is silent. This scenario for example can occur when the nodes are subject to energy limitations. It is outside of the scope of this work to consider specific reasons when this case can appear but we want to keep the proposed analysis general.} Note that in our case, a node is aware about the state of its neighbor. \footnote{In a shared access network, it is practical to assume some minimum exchanging information of one bit in this case.}

\subsection{Physical Layer Model}

The MPR (Multi-Packet Reception) channel model used in this paper is a generalized form of the packet erasure model. In the wireless environment, a packet can be decoded correctly by the receiver if the received SINR exceeds a certain threshold. More precisely, suppose that we are given a set $T$ of nodes transmitting in the same time slot. Let $P_{rx}(i,j)$ be the signal
power received from node $i$ at node $j$ (when $i$ transmits), and let $SINR(i,j)$ be the SINR received by node j, i.e., $SINR(i,j)=\frac{P_{rx}(i,j)}{n_{j}+\sum_{k\in T-\{i\}}P_{rx}(k,j)}$, 
where $n_{j}$ denotes the receiver noise power at $j$. We assume that a packet transmitted by $i$ is successfully received by $j$ if and only if $SINR(i,j)\geq \gamma_i$, where $\gamma_i$ is the SINR threshold. The wireless channel is subject to fading; let $P_{tx}(i)$ be the transmitting power at node $i$ and $r(i,j)$ be the distance between $i$ and $j$. The power received by $j$ when $i$ transmits is $P_{rx}(i,j) = A(i,j)g(i,j)$ where $A(i,j)$ is a random variable representing channel fading. We assume that the fading model is slow, flat fading, constant during a time slot and independently varying from time slot to time slot. Under Rayleigh fading, it is known \cite{tse} that $A(i,j)$ is exponentially distributed. The received power factor $g(i,j)=P_{tx}(i)(r(i,j))^{-h}$ where $h$ is the path loss exponent with typical values between $2$ and $6$. In this study we consider one destination which is common for both nodes, thus $j$ denotes the common destination here and we can also write $SINR(i,j) = SINR_i$.
The success probability of link $i,j$ when the transmitting nodes are in $T$ is given by \cite{tse}
\begin{equation}
\begin{array}{c}
P_{s}(i,T)=\mathrm{Pr}\left(SINR_i \geq \gamma_i\right)=\exp \left(\frac{-\gamma_{i}n_{j}}{v(i,j)g(i,j)}\right)\prod_{k\in T-\{i\}} \left(1+\frac{v(k,j)g(k,j)}{v(i,j)g(i,j)}\right)^{-1},
\end{array}
\label{pro}
\end{equation}
where $v(i,j)$ is the parameter of the Rayleigh random variable for fading.
According to (\ref{pro}) we denote $P_{i/\{i,j\}}$ to be the success probability of node $i$ when the transmitting nodes are $i$ and $j$, $i,j=1,2$. More precisely: the strongest user can be successfully received even in the presence of simultaneous transmissions (i.e., collision), if the difference in power is large enough \cite{clau_tcom1990} (provided that $SINR(i,k)\geq\gamma_{k}$). If both nodes transmit, but their $SINR$ are below the threshold $\gamma_{k}$, their transmission is unsuccessful. 

Next, we will define for convenience some conditional probabilities on top of the expression given in \eqref{pro}.\footnote{A similar approach can be found in \cite{NawareTong2005,WangTSP2007}.} We define $P_{1/\{1,2\}}$ the probability that when both nodes $1$ and $2$ are transmitting only the transmission from node $1$ is successful. Then $P_{1/\{1,2\}}=\mathrm{Pr}\left(SINR_1 \geq \gamma_1,SINR_2 < \gamma_2 \right)$. Similarly we can define $P_{2/\{1,2\}}$. The $P_{1,2/\{1,2\}}$ is the probability that both packets transmitted by nodes $1$ and $2$ are transmitted successfully, then $P_{1,2/\{1,2\}}=\mathrm{Pr}\left(SINR_1 \geq \gamma_1,SINR_2 \geq \gamma_2 \right)$. Thus, $P_{s}(1,\{1,2\})= \mathrm{Pr}\left(SINR_1 \geq \gamma_1\right)= \mathrm{Pr}\left(SINR_1 \geq \gamma_1,SINR_2 < \gamma_2 \right)+\mathrm{Pr}\left(SINR_1 \geq \gamma_1,SINR_2 \geq \gamma_2 \right)$, then we have that $P_{s}(1,\{1,2\})=P_{1/\{1,2\}}+P_{1,2/\{1,2\}}$.

The $P_{0/\{1,2\}}=\mathrm{Pr}\left(SINR_1 < \gamma_1,SINR_2 < \gamma_2 \right)$ is the probability where both packets fail to reach the destination when both nodes $1$ and $2$ are transmitting, then $P_{0/\{1,2\}}=1-P_{1/\{1,2\}}-P_{2/\{1,2\}}-P_{1,2/\{1,2\}}$. Note that $P_{i/\{i\}}=P_{s}(i,\{i\})$ is the success probability of node $i$ when only $i$-th node transmits but the other one is active (i.e., there are packets stored in its buffer), we denote with $P_{0/\{i\}}=1-P_{s}(i,\{i\})$ the outage probability respectively. Furthermore, we assume that a node adjusts its transmission parameters such as the transmission power when the other node has an empty queue (i.e is inactive). Thus, the success (resp. outage) probability of node $i$ when the other node is inactive is denoted by $\tilde{P}_{i/\{i\}}$ (resp. $\tilde{P}_{0/\{i\}}$). 
By allowing this we can consider a simple power control policy where a node can adapt its transmission power when the other node is empty, in order to increase the success probability thus, is reasonable to assume that $\tilde{P}_{i/\{i\}}\geq P_{i/\{i\}}$.

In the case of an unsuccessful transmission the packet has to be re-transmitted later. We assume that the receiver gives an instantaneous (error-free) feedback (ACK) of all the packets that were successful in a slot at the end of the slot to all the nodes. The nodes remove the successfully transmitted packets from their buffers. The packets that were not successfully transmitted are retained.

Now we can write the expressions for the average service rates $\mu_1$ and $\mu_2$ seen at node $1$ and $2$ respectively. The expression for $\mu_1$ is given below and similarly we can obtain $\mu_2$. Denote by $N_{k,n}$ the length of queue $k$ at the beginning of time slot $n$. Then,
\begin{equation} \label{eq:mu1}
\begin{array}{c}
\mu_1=Pr(N_2 \neq 0) \left[\alpha_1\bar{\alpha}_{2}P_{1/\{1\}}+\alpha_1 \alpha_2 \left(P_{1/\{1,2\}}+P_{1,2/\{1,2\}}\right)\right]  +Pr(N_2 = 0) \alpha_{1}^{*} \tilde{P}_{1/{1}},
\end{array}
\end{equation}
where $N_{k}=\lim_{n\to\infty}N_{k,n}$, $k=1,2$. We can easily see from (\ref{eq:mu1}) that the service rate of one queue depends on the status of the other queue. Thus, the queues are coupled. In Section \ref{sec:stability2} we bypass this difficulty by applying the stochastic dominance technique to obtain the exact stability region. Regarding the delay analysis we need a different treatment, based on the powerful and technical theory on boundary value problems; see Section \ref{sec:bound}.

Based on the definition in~\cite{szpa}, a user's node is said to be \emph{stable} if
$\lim_{n \rightarrow \infty} {Pr}[N_{k,n}< {x}] = F(x)$ and $\lim_{ {x} \rightarrow \infty} F(x) = 1$.
Loynes' theorem~\cite{b:Loynes} states that if the arrival and service processes of a queue are strictly jointly stationary and the average arrival rate is less than the average service rate, then the queue is stable. If the average arrival rate is greater than the average service rate, then the queue is unstable and the value of $N_{k,n}$ approaches infinity almost surely. The stability region of the system is defined as the set of arrival rate vectors $\boldsymbol{\lambda}=(\lambda_1, \lambda_2)$, for which the queues in the system are stable.

\section{Stability Region for $N=2$ users} \label{sec:stability2}

In this section, we derive the stability region, i.e., the region of values for $\lambda_{k}$, $k=1,2$, for which our system is stable. The following theorem provides the stability conditions for the two-user random access network.

\begin{thm}\label{lem}
The stability region $\mathcal{R}$ for a fixed transmission probability vector $\mathbf{p}:=[\alpha_{1},\alpha_{1}^{*},\alpha_{2},\alpha_{2}^{*}]$ is given by 
$\mathcal{R}=\mathcal{R}_1 \cup \mathcal{R}_2$ where 
\begin{align} \label{stab_R1}
\mathcal{R}_1 = \left\{ (\lambda_{1},\lambda_{2}) :\lambda_{1}<\alpha_{1}^{*}\tilde{P}_{1/\{1\}}+\widehat{d}_{1}\frac{\lambda_{2}}{\alpha_{2}\widehat{\alpha}_{1}},\lambda_{2}<\alpha_{2}\widehat{\alpha}_{1} \right\},\\ 
\mathcal{R}_2 = \left\{ (\lambda_{1},\lambda_{2}) :\lambda_{2}<\alpha_{2}^{*}\tilde{P}_{2/\{2\}}+\widehat{d}_{2}\frac{\lambda_{1}}{\alpha_{1}\widehat{\alpha}_{2}},\lambda_{1}<\alpha_{1}\widehat{\alpha}_{2} \right\},
\end{align}
where $\widehat{d}_{k}=d_{k}+\alpha_{1}\alpha_{2}P_{1,2/\{1,2\}}$ for $k=1,2$, $d_{1}=\alpha_{1}(\bar{\alpha}_{2}P_{1/\{1\}}+\alpha_{2}P_{1/\{1,2\}})-\alpha_{1}^{*}\tilde{P}_{1/\{1\}}$, $d_{2}=\alpha_{2}(\bar{\alpha}_{1}P_{2/\{2\}}+\alpha_{1}P_{2/\{1,2\}})-\alpha_{2}^{*}\tilde{P}_{2/\{2\}}$, $\widehat{\alpha}_{1}=\bar{\alpha}_{1}P_{2/\{2\}}+\alpha_{1}(P_{2/\{1,2\}}+P_{1,2/\{1,2\}})$, $\widehat{\alpha}_{2}=\bar{\alpha}_{2}P_{1/\{1\}}+\alpha_{2}(P_{1/\{1,2\}}+P_{1,2/\{1,2\}})$.
\end{thm}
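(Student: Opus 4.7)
The plan is to apply the stochastic dominance technique of Rao and Ephremides to decouple the two interacting queues. I introduce two auxiliary \emph{dominant systems} $S_1$ and $S_2$: in $S_j$ node $j$ transmits a dummy packet whenever its own buffer is empty (using the same probability $\alpha_j$ it would use with a real packet when the other queue is non-empty), while node $3-j$ follows the original protocol. A standard common-probability-space coupling with identical arrivals, transmission coins and channel realisations gives, by induction on the slot index, the pathwise domination $N_k^{S_j}(n)\ge N_k(n)$ for every $k$ and $n$, so stability of $S_j$ implies stability of the original system.

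First I analyse $S_1$. Because node $1$ now effectively occupies the channel whenever node $2$ is non-empty, node $2$'s per-slot successful-departure probability is independent of the state of queue $1$. Conditioning on whether node $1$ transmits gives $\mu_2^{(1)}=\alpha_2[\bar{\alpha}_1 P_{2/\{2\}}+\alpha_1(P_{2/\{1,2\}}+P_{1,2/\{1,2\}})]=\alpha_2\widehat{\alpha}_1$, so queue $2$ in $S_1$ is an ordinary discrete-time queue with constant service rate. Loynes' theorem yields the constraint $\lambda_2<\alpha_2\widehat{\alpha}_1$, and flow balance $\lambda_2=\Pr(N_2\neq 0)\mu_2^{(1)}$ gives $\Pr(N_2=0)=1-\lambda_2/(\alpha_2\widehat{\alpha}_1)$. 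Substituting this value of $\Pr(N_2=0)$ into expression (\ref{eq:mu1}) -- which still describes node $1$'s real-packet departures in $S_1$ since dummy slots carry no real packet -- and collecting the coefficients of $\lambda_2$ produces $\mu_1^{(1)}=\alpha_1^{*}\tilde{P}_{1/\{1\}}+\widehat{d}_1\,\lambda_2/(\alpha_2\widehat{\alpha}_1)$, with $\widehat{d}_1$ exactly the constant defined in the statement. A second application of Loynes' theorem to queue $1$ then delivers $\mathcal{R}_1$. The region $\mathcal{R}_2$ is obtained by the symmetric argument applied to $S_2$.

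The inclusion $\mathcal{R}_1\cup\mathcal{R}_2\subseteq\mathcal{R}$ follows at once from pathwise dominance. For the reverse inclusion I invoke the standard indistinguishability argument: while both queues are simultaneously non-empty, $S_j$ and the original system are coupled to produce identical sample paths, so if the original is stable at $(\lambda_1,\lambda_2)$ then the dummy transmissions in the relevant dominant system contribute only on a set whose long-run frequency matches the original's idle fraction, and the two systems share stability. Negating this shows that a rate vector outside $\mathcal{R}_1\cup\mathcal{R}_2$ destabilises both $S_1$ and $S_2$, and hence the original.

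The main obstacle is not the dominance itself or the algebra that bundles the constants into $\widehat{d}_k$ and $\widehat{\alpha}_k$, both of which are largely mechanical. The delicate point is the indistinguishability step in the converse direction, which has to be adapted to the MPR/SINR-based service law (richer than the classical collision model) and to the asymmetric transmission probabilities $\alpha_k^{*}\ne\alpha_k$ that the protocol uses when the other queue is empty; in particular, one must check that the replacement of $\alpha_k$ by $\alpha_k^{*}$ in the ``alone'' regime does not introduce new sample paths on which the coupled systems can diverge before the dummy mechanism activates.
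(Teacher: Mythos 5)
Your proposal is correct and follows essentially the same route as the paper: construct the two dominant systems with dummy-packet transmissions, apply Loynes' criterion to the decoupled queue to get $\lambda_2<\alpha_2\widehat{\alpha}_1$ and $\Pr(N_2=0)=1-\lambda_2/(\alpha_2\widehat{\alpha}_1)$, substitute into \eqref{eq:mu1} to obtain $\mathcal{R}_1$ (and symmetrically $\mathcal{R}_2$), and close the converse with the indistinguishability-at-the-boundary argument. The algebra you sketch for $\mu_1^{(1)}=\alpha_1^{*}\tilde{P}_{1/\{1\}}+\widehat{d}_1\lambda_2/(\alpha_2\widehat{\alpha}_1)$ checks out against the definitions of $d_1$ and $\widehat{d}_1$.
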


\begin{proof}
To determine the stability region of our system (depicted in Fig. \ref{fig:region}) we apply the stochastic dominance technique \cite{Rao_TIT1988}, i.e. we construct hypothetical dominant systems, in which the source transmits dummy packets for the packet queue that is empty, while for the non-empty queue it transmits according to its traffic. Under this approach, we consider the $R_{1}$, and $R_{2}$-dominant systems. In the $R_{k}$ dominant system, whenever the queue of user $k$, $k=1,2$ empties, it continues transmitting a dummy packet.

Thus, in $R_{1}$, node $1$ never empties, and hence, node $2$ sees a constant service rate, while the service rate of node $1$ depends on the state of node $2$, i.e., empty or not. We proceed with queue at node $1$. The service rate of the first node is given by \eqref{eq:mu1}. The service rate of the second user is given by
\begin{equation} 
\begin{array}{c}
\mu_2=\alpha_2\bar{\alpha}_{1}P_{2/\{2\}}+\alpha_2 \alpha_1 \left(P_{2/\{1,2\}}+P_{1,2/\{1,2\}}\right).
\end{array}
\end{equation}
By applying Loyne's criterion, the second node is stable if and only if the average arrival rate is less that the average service rate, $\lambda_{2} < \alpha_2\bar{\alpha}_{1}P_{2/\{2\}}+\alpha_2 \alpha_1 \left(P_{2/\{1,2\}}+P_{1,2/\{1,2\}}\right)$. We can obtain the probability that the second node is empty and is given by $Pr(N_2 = 0)=1-\frac{\lambda_{2}}{\mu_2}$. After replacing $Pr(N_2 = 0)$ into \eqref{eq:mu1}, and applying Loynes criterion we can obtain the stability condition for the first node. Then, we have the stability region $\mathcal{R}_1$ given by \eqref{stab_R1}. Note that the expression in \eqref{stab_R1} is given in a more compact form that it will be useful in the next sections.
Similarly, we can obtain the stability region for the second dominant system $\mathcal{R}_2$, the proof is omitted due to space limitations. For a detailed treatment of dominant systems please refer to \cite{Rao_TIT1988}.

An important observation made in \cite{Rao_TIT1988} is that the stability conditions obtained by the stochastic dominance technique are not only sufficient but also necessary for the stability of the original system. The \emph{indistinguishability} argument \cite{Rao_TIT1988} applies here as well. Based on the construction of the dominant system, we can see that the queue sizes in the dominant system are always greater than those in the original system, provided they are both initialized to the same value and the arrivals are identical in both systems. Therefore, given $\lambda_{2}<\mu_{2}$, if for some $\lambda_{1}$, the queue at the first user is stable in the dominant system, then the corresponding queue in the original system must be stable. Conversely, if for some $\lambda_{1}$ in the dominant system, the queue at the first node saturates, then it will not transmit dummy packets, and as long as the first user has a packet to transmit, the behavior of the dominant system is identical to that of the original system since dummy packet transmissions are eliminated as we approach the stability boundary. Therefore, the original and the dominant system are indistinguishable at the boundary points.
\end{proof}

\begin{remark}
The stability region is a convex polyhedron when the following condition holds
$\frac{\alpha_{1}\widehat{\alpha}_{2}}{\alpha_{1}^{*}\tilde{P}_{1/\{1\}}}+\frac{\alpha_{2}\widehat{\alpha}_{2}}{\alpha_{2}^{*}\tilde{P}_{2/\{2\}}}\geq1$. 
When equality holds in the previous condition as depicted also in Fig. \ref{fig:region}, the region is a triangle and coincides with the case of time-sharing. Convexity is an important property since it corresponds to the case when parallel concurrent transmissions are preferable to a time-sharing scheme. Additionally, convexity of the stability region implies that if two rate pairs are stable, then any rate pair lying on the line segment joining those two rate pairs is also stable.
\end{remark}

\begin{remark}
The condition
$\frac{\alpha_{1}\widehat{\alpha}_{2}}{\alpha_{1}^{*}\tilde{P}_{1/\{1\}}}+\frac{\alpha_{2}\widehat{\alpha}_{2}}{\alpha_{2}^{*}\tilde{P}_{2/\{2\}}}\geq1$ is the generalized version of the condition that characterizes the MPR capability in the system which was first appeared in \cite{NawareTong2005}.
\end{remark}

\begin{figure}[ht]
	\centering
	\includegraphics[scale=0.3]{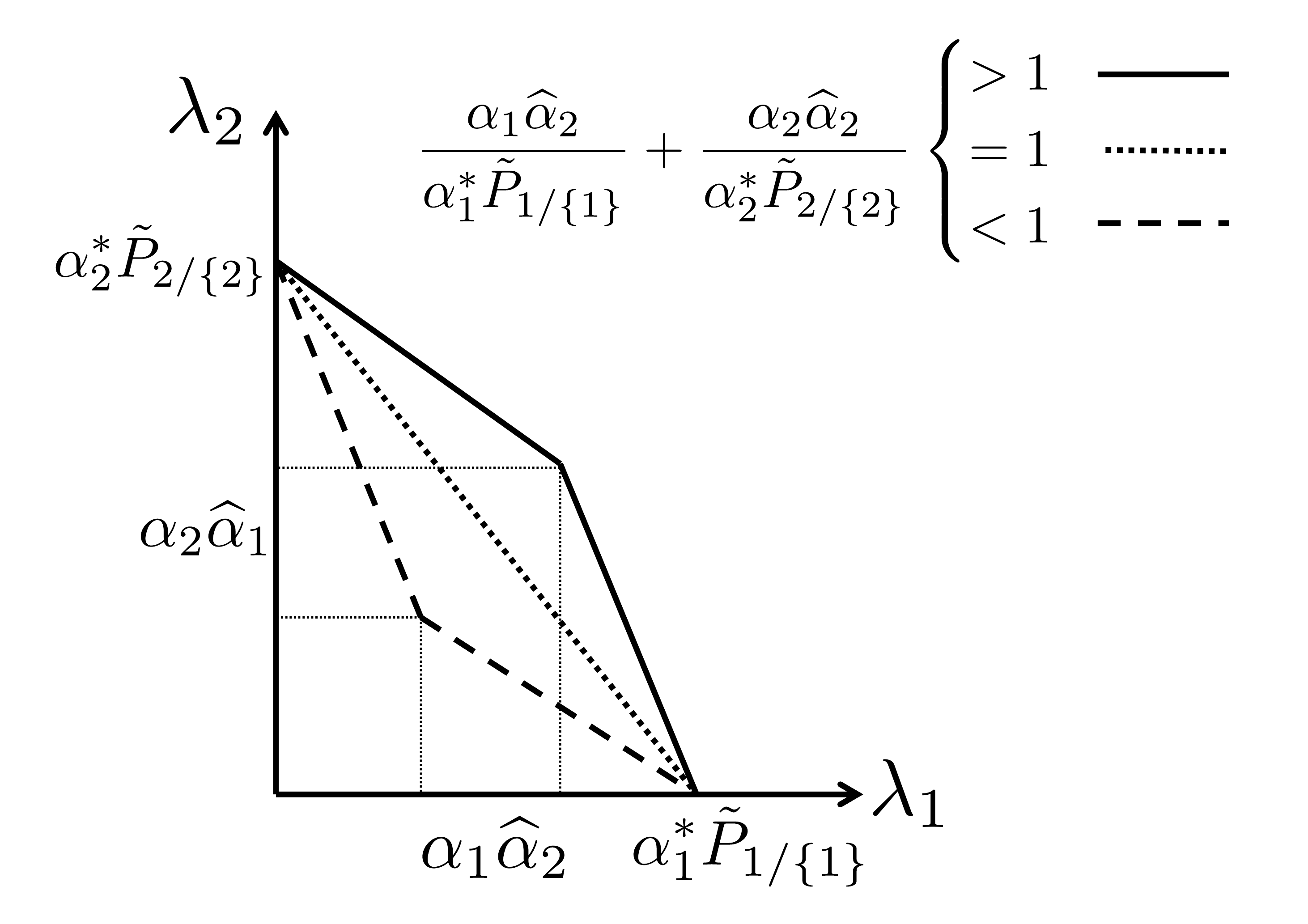}
	\caption{The Stability Region described in Theorem \ref{lem}.}
	\centering
	\label{fig:region}
\end{figure}

\section{Preparatory Analysis \& Results}\label{sec:analysis}

In this section we provide the first part of the analysis that is needed to obtain the expressions for the delay analysis. More explicitly, we derive the fundamental functional equation and we obtain some important results that we will use in the analysis of Section \ref{sec:bound}.

Let $N_{k,n}$ be the number of packets in user node $k$ at the beginning of the $n$th slot. Then, $Y_{n}=(N_{1,n},N_{2,n})$ is a two-dimensional Markov chain with state space is $E=\{(i,j):i,j=0,1,2,...\}$ describing our system model. The queues of both users evolve as $N_{k,n+1}=[N_{k,n}-D_{k,n}]^{+}+A_{k,n},k=1,2$,
where $D_{k,n}$ is the number of departures from user $k$ queue at time slot $n$. Then, the queue evolution equation implies
\begin{equation}
\begin{array}{l}
E(x^{N_{1,n+1}}y^{N_{2,n+1}})=D(x,y)\{P(N_{1,n}=N_{2,n}=0)+E(x^{N_{1,n}}1_{\{N_{1,n}>0,N_{2,n}=0\}})\\\times[\alpha_{1}^{*}\tilde{P}_{0/\{1\}}+\frac{\alpha_{1}^{*}\tilde{P}_{1/\{1\}}}{x}+\bar{\alpha}_{1}^{*}]+E(y^{N_{2,n}}1_{\{N_{1,n}=0,N_{2,n+1}>0\}})[\alpha_{2}^{*}\tilde{P}_{0/\{2\}}+\frac{\alpha_{2}^{*}\tilde{P}_{2/\{2\}}}{y}+\bar{\alpha}_{2}^{*}]\\
+E(x^{N_{1,n}}y^{N_{2,n}}1_{\{N_{1,n}>0,N_{2,n}>0\}})[\frac{\alpha_{1}\bar{\alpha}_{2}P_{1/\{1\}}}{x}+\frac{\alpha_{2}\bar{\alpha}_{1}P_{2/\{2\}}}{y}+\alpha_{1}\bar{\alpha}_{2}P_{0/\{1\}}+\alpha_{2}\bar{\alpha}_{1}P_{0/\{2\}}\\
+\bar{\alpha}_{1}\bar{\alpha}_{2}+\alpha_{1}\alpha_{2}(P_{0/\{1,2\}}+\frac{P_{1/\{1,2\}}}{x}+\frac{P_{2/\{1,2\}}}{y}+\frac{P_{1,2/\{1,2\}}}{xy})]\},
\end{array}
\label{dxx}
\end{equation}
where $1_{\{S\}}$ denotes the indicator function of the event $S$. Assuming that the system is stable, let $H(x,y)=\lim_{n\to\infty}E(x^{N_{1,n}}y^{N_{2,n}}),\,|x|\leq1,\,|y|\leq1$.
Using (\ref{dxx}) after some algebra we obtain the following functional equation,
\begin{equation}
\begin{array}{c}
R(x,y)H(x,y)=A(x,y)H(x,0)+B(x,y)H(0,y)+C(x,y)H(0,0),
\end{array}
\label{we}
\end{equation}
where
\begin{displaymath}
\begin{array}{rl}
R(x,y)=&D^{-1}(x,y)-1+\alpha_{1}(\bar{\alpha}_{2}P_{1/\{1\}}+\alpha_{2}P_{1/\{1,2\}})(1-\frac{1}{x})\vspace{2mm}\\
&+\alpha_{2}(\bar{\alpha}_{1}P_{2/\{2\}}+\alpha_{1}P_{2/\{1,2\}})(1-\frac{1}{y})+\alpha_{1}\alpha_{2}P_{1,2/\{1,2\}}(1-\frac{1}{xy}),\vspace{2mm}\\
A(x,y)=&\alpha_{2}(\bar{\alpha}_{1}P_{2/\{2\}}+\alpha_{1}P_{2/\{1,2\}})(1-\frac{1}{y})+d_{1}(1-\frac{1}{x})+\alpha_{1}\alpha_{2}P_{1,2/\{1,2\}}(1-\frac{1}{xy}),\vspace{2mm}\\
B(x,y)=&\alpha_{1}(\bar{\alpha}_{2}P_{1/\{1\}}+\alpha_{2}P_{1/\{1,2\}})(1-\frac{1}{x})+d_{2}(1-\frac{1}{y})+\alpha_{1}\alpha_{2}P_{1,2/\{1,2\}}(1-\frac{1}{xy}),\vspace{2mm}\\
C(x,y)=&d_{2}(\frac{1}{y}-1)+d_{1}(\frac{1}{x}-1)+\alpha_{1}\alpha_{2}P_{1,2/\{1,2\}}(\frac{1}{xy}-1).
\end{array}
\end{displaymath}

In the following, we will solve (\ref{we}) by assuming geometrically distributed arrival processes at both stations, which are assumed to be independent. More precisely, we assume $D(x,y)=[(1+\lambda_{1}(1-x))(1+\lambda_{2}(1-y))]^{-1}$, $|x|\leq1$, $|y|\leq1$.

\begin{remark}
Without loss of generality, it is realistic to assume that $d_{i}<0$, $i=1,2$. Indeed, $d_{i}$ is the difference of the successful transmission probability of node $i$, when both nodes are active (i.e., both nodes have packets to send) minus the successful transmission probability of node $i$ when the other node is inactive (i.e., only node $i$ has packets to send). Clearly, in the later case due to the lack of interference and since node $i$ senses the other node inactive will transmit with a higher probability in order to exploit the idle slot of the other node. Definitely, in such a case $\tilde{P}_{i/\{i\}}\geq P_{i/\{i\}}\geq P_{i/\{1,2\}}$.\footnote{The particular case where $\tilde{P}_{i/\{i\}}= P_{i/\{i\}}= P_{i/\{1,2\}}$ is omitted since there is no coupling between the queues and the analysis becomes trivial.} Therefore from hereon we assume that $d_{i}<0$, $i=1,2$.
\end{remark}

Some interesting relations can be obtained directly from the functional equation (\ref{we}). Taking $y = 1$, dividing by $x-1$ and taking $x\to 1$ in (\ref{we}) and vice versa yield the following ``conservation of flow" relations:
\begin{equation}
\lambda_{1}=\alpha_{1}\widehat{\alpha}_{2}(1-H(0,1)-H(1,0)+H(0,0))+\alpha_{1}^{*}\tilde{P}_{1/\{1\}}(H(1,0)-H(0,0)),
\label{r1}
\end{equation}
\begin{equation}
\lambda_{2}=\alpha_{2}\widehat{\alpha}_{1}(1-H(0,1)-H(1,0)+H(0,0))+\alpha_{2}^{*}\tilde{P}_{2/\{2\}}(H(0,1)-H(0,0)),
\label{r2}
\end{equation}
where $\widehat{\alpha}_{1}=\bar{\alpha}_{1}P_{2/\{2\}}+\alpha_{1}(P_{2/\{1,2\}}+P_{1,2/\{1,2\}})$, $\widehat{\alpha}_{2}=\bar{\alpha}_{2}P_{1/\{1\}}+\alpha_{2}(P_{1/\{1,2\}}+P_{1,2/\{1,2\}})$.

In the following, the analysis is distinguished in two cases:
\begin{enumerate}
\item For $\frac{\alpha_{1}\widehat{\alpha}_{2}}{\alpha_{1}^{*}\tilde{P}_{1/\{1\}}}+\frac{\alpha_{2}\widehat{\alpha}_{1}}{\alpha_{2}^{*}\tilde{P}_{2/\{2\}}}=1$, equations (\ref{r1}), (\ref{r2}) yield $H(0,0)=1-\frac{\lambda_{1}}{\alpha_{1}^{*}\tilde{P}_{1/\{1\}}}-\frac{\lambda_{2}}{\alpha_{2}^{*}\tilde{P}_{2/\{2\}}}=1-\rho$.
\item In case $\frac{\alpha_{1}\widehat{\alpha}_{2}}{\alpha_{1}^{*}\tilde{P}_{1/\{1\}}}+\frac{\alpha_{2}\widehat{\alpha}_{1}}{\alpha_{2}^{*}\tilde{P}_{2/\{2\}}}\neq1$, equations (\ref{r1}), (\ref{r2}) yield
\begin{equation}\label{rd}
\begin{array}{lr}
H(1,0)=\frac{\alpha_{1}\widehat{\alpha}_{2}(\lambda_{2}-\alpha_{2}^{*}\tilde{P}_{2/\{2\}})-\lambda_{1}\widehat{d}_{2}-\alpha_{1}^{*}\tilde{P}_{1/\{1\}}\widehat{d}_{2}H(0,0)}{d_{1}d_{2}-\alpha_{1}\widehat{\alpha}_{2}\alpha_{2}\widehat{\alpha}_{1}},&
H(0,1)=\frac{\alpha_{2}\widehat{\alpha}_{1}(\lambda_{1}-\alpha_{1}^{*}\tilde{P}_{1/\{1\}})-\lambda_{2}\widehat{d}_{1}-\alpha_{2}^{*}\tilde{P}_{2/\{2\}}\widehat{d}_{1}H(0,0)}{{d_{1}d_{2}-\alpha_{1}\widehat{\alpha}_{2}\alpha_{2}\widehat{\alpha}_{1}}}.
\end{array}
\end{equation}
\end{enumerate}

We now focus on the derivation of some preparatory results in view of the resolution of functional equation (\ref{we}). More precisely, we focus on the analysis of the kernel equation $R(x,y)=0$.
\subsection{Analysis of the kernel}\label{ker}
In the following we consider the kernel equation $R(x,y)=0$ and provide some important properties. We focus on a subclass of MPR channels, the so called ``capture" channels, i.e., $P_{1,2/\{1,2\}}=0$ (at most one user has a successful packet transmission even if many users transmit in that slot \cite{capture1,capture2}). Note that, $R(x,y)=a(x)y^{2}+b(x)y+c(x)=\widehat{a}(y)x^{2}+\widehat{b}(y)x+\widehat{c}(y)$,
where, $a(x)=\lambda_{2}x(\lambda_{1}(x-1)-1),\,b(x)=x(\lambda+\lambda_{1}\lambda_{2}+\alpha_{1}\widehat{\alpha}_{2}+\alpha_{2}\widehat{\alpha}_{1})-\alpha_{1}\widehat{\alpha}_{2}-\lambda_{1}(1+\lambda_{2})x^{2}$, $c(x)=-\alpha_{2}\widehat{\alpha}_{1}x$, $\widehat{a}(y)=\lambda_{1}y(\lambda_{2}(y-1)-1)$, $\widehat{b}(y)=y(\lambda+\lambda_{1}\lambda_{2}+\alpha_{1}\widehat{\alpha}_{2}+\alpha_{2}\widehat{\alpha}_{1})-\alpha_{2}\widehat{\alpha}_{1}-\lambda_{2}(1+\lambda_{1})y^{2}$, $\widehat{c}(y)=-\alpha_{1}\widehat{\alpha}_{2}y$.
The roots of $R(x,y)=0$ are $X_{\pm}(y)=\frac{-\widehat{b}(y)\pm\sqrt{D_{y}(y)}}{2\widehat{a}(y)}$, $Y_{\pm}(x)=\frac{-b(x)\pm\sqrt{D_{x}(x)}}{2a(x)}$, where $D_{y}(y)=\widehat{b}(y)^{2}-4\widehat{a}(y)\widehat{c}(y)$, $D_{x}(x)=b(x)^{2}-4a(x)c(x)$.

\begin{lemma}
For $|y|=1$, $y\neq1$, the kernel equation $R(x,y)=0$ has exactly one root $x=X_{0}(y)$ such that $|X_{0}(y)|<1$. For $\lambda_{1}<\alpha_{1}\widehat{\alpha}_{2}$, $X_{0}(1)=1$. Similarly, we can prove that $R(x,y)=0$ has exactly one root $y=Y_{0}(x)$, such that $|Y_{0}(x)|\leq1$, for $|x|=1$.
\end{lemma}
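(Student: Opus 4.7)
The plan is to apply Rouché's theorem in the standard fashion used in boundary value problem analyses of random walks in the quarter plane (Fayolle--Iasnogorodski--Malyshev). Viewing $R(x,y)$ as a quadratic in $x$ for fixed $y$, the equation $R(x,y)=0$ is equivalent (after clearing denominators) to $\widehat{a}(y)x^{2}+\widehat{b}(y)x+\widehat{c}(y)=0$, so it always has two roots $X_{\pm}(y)$. The task is to show that, for $|y|=1$ with $y\neq 1$, exactly one of them lies strictly inside $|x|<1$, and to identify this root with $X_{0}(y)$.

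The main step is to establish, for every $|y|=1$, $y\neq 1$, the strict inequality
\begin{equation*}
|\widehat{b}(y)|>|\widehat{a}(y)|+|\widehat{c}(y)|.
\end{equation*}
Once this is in hand, setting $f(x)=\widehat{b}(y)x$ and $g(x)=\widehat{a}(y)x^{2}+\widehat{c}(y)$ gives $|f(x)|>|g(x)|$ on $|x|=1$, so by Rouché's theorem $f+g$ and $f$ have the same number of zeros inside the unit disk; since $f$ has exactly the single zero $x=0$, the quadratic has exactly one root $X_{0}(y)$ with $|X_{0}(y)|<1$. To verify the inequality, I plan to use the explicit closed forms from the excerpt: on $|y|=1$ one has $|\widehat{c}(y)|=\alpha_{1}\widehat{\alpha}_{2}$ and $|\widehat{a}(y)|=\lambda_{1}|\lambda_{2}(y-1)-1|$, while $|\widehat{b}(y)|$ can be expanded in $y-1$. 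Using the stability assumptions $\lambda_{1}<\alpha_{1}\widehat{\alpha}_{2}$ and $\lambda_{2}<\alpha_{2}\widehat{\alpha}_{1}$ from Theorem~\ref{lem} (restricted to the capture case $P_{1,2/\{1,2\}}=0$), one recovers the strict Rouché bound. The probabilistic interpretation is that $\widehat{a},\widehat{b},\widehat{c}$ assemble the step distribution of the interior random walk and the bound is the standard reflection of its negative horizontal drift under ergodicity.

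For the boundary value $y=1$, substitute directly: with the geometric arrival form $D^{-1}(x,1)=1+\lambda_{1}(1-x)$ and $P_{1,2/\{1,2\}}=0$, the kernel reduces to
\begin{equation*}
R(x,1)=\lambda_{1}(1-x)+\alpha_{1}\widehat{\alpha}_{2}\!\left(1-\tfrac{1}{x}\right)=\tfrac{(1-x)(\lambda_{1}x-\alpha_{1}\widehat{\alpha}_{2})}{x},
\end{equation*}
so the two roots are $x=1$ and $x=\alpha_{1}\widehat{\alpha}_{2}/\lambda_{1}$. Under the stability hypothesis $\lambda_{1}<\alpha_{1}\widehat{\alpha}_{2}$ the latter root exceeds $1$, hence the inner root is $X_{0}(1)=1$, and by continuity this is the limit of $X_{0}(y)$ as $y\to 1$ on the unit circle. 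The analogous assertion for $Y_{0}(x)$ follows by swapping the roles of $x$ and $y$ in the argument above, replacing $(\widehat{a},\widehat{b},\widehat{c})$ with $(a,b,c)$ and using $\lambda_{2}<\alpha_{2}\widehat{\alpha}_{1}$; the slight relaxation to $|Y_{0}(x)|\le 1$ (rather than strict) accommodates the boundary point $x=1$ where $Y_{0}(1)=1$.

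The principal obstacle is the uniform verification of the strict Rouché inequality on the punctured unit circle $\{|y|=1,\,y\neq 1\}$, because all three terms $|\widehat{a}(y)|,|\widehat{b}(y)|,|\widehat{c}(y)|$ collapse to a common value as $y\to 1$, causing the gap to close. This has to be handled by a local expansion near $y=1$: writing $y=e^{i\theta}$ and expanding to second order in $\theta$, the leading terms cancel and the surviving $O(\theta^{2})$ contribution is controlled in sign by the stability margins $\alpha_{1}\widehat{\alpha}_{2}-\lambda_{1}$ and $\alpha_{2}\widehat{\alpha}_{1}-\lambda_{2}$. Away from a neighborhood of $y=1$, compactness of the unit circle and continuity of the bound make the estimate routine.
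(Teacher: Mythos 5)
Your skeleton is sound and your treatment of the boundary point $y=1$ is exactly what the paper does: factor $R(x,1)=\frac{(1-x)(\lambda_{1}x-\alpha_{1}\widehat{\alpha}_{2})}{x}$ and use $\lambda_{1}<\alpha_{1}\widehat{\alpha}_{2}$ to discard the root $\alpha_{1}\widehat{\alpha}_{2}/\lambda_{1}>1$. The gap is in the main step. You split the quadratic as $f(x)=\widehat{b}(y)x$ versus $g(x)=\widehat{a}(y)x^{2}+\widehat{c}(y)$ and reduce everything to the coefficient inequality $|\widehat{b}(y)|>|\widehat{a}(y)|+|\widehat{c}(y)|$ on $|y|=1$, $y\neq 1$, but you never establish it: near $y=1$ you only describe the second-order expansion you would perform, and away from $y=1$ you appeal to ``compactness of the unit circle and continuity of the bound.'' Compactness and continuity cannot produce a pointwise strict inequality --- they would only upgrade an already-proved strict inequality to a uniform one. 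So the single claim that carries the whole lemma is asserted rather than proved; note also that at $y=1$ both sides collapse to $\lambda_{1}+\alpha_{1}\widehat{\alpha}_{2}$, so all the slack must come from the $O(\theta^{2})$ terms, and your coefficient-wise bound is strictly stronger than what Rouch\'e actually requires, which makes the four-parameter verification genuinely delicate.

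The paper sidesteps all of this with a different Rouch\'e decomposition: it writes $R(x,y)=\frac{xy-\Psi(x,y)}{xyD(x,y)}$, where $\Psi(x,y)=D(x,y)[xy-y(x-1)\alpha_{1}\widehat{\alpha}_{2}-x(y-1)\alpha_{2}\widehat{\alpha}_{1}]$ is the generating function of a proper probability distribution, and compares $xy$ with $\Psi(x,y)$ on $|x|=1$. Then $|\Psi(x,y)|\leq \Psi(1,1)=1=|xy|$, with strict inequality once $y\neq 1$ because the terms of a nondegenerate PGF cannot all align on the unit circle; hence $xy-\Psi(x,y)$ has exactly one zero in $|x|<1$, and since $\Psi(0,y)\neq 0$ that zero is a zero of $R$. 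If you want to keep your coefficient-based route you must actually prove $|\widehat{b}(y)|>|\widehat{a}(y)|+|\widehat{c}(y)|$ pointwise on the punctured circle; the cleaner fix is to switch to the probabilistic decomposition, where strictness comes from the structure of $\Psi$ rather than from estimates on $\widehat{a},\widehat{b},\widehat{c}$.
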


\begin{proof}
It is easily seen that $R(x,y)=\frac{xy-\Psi(x,y)}{xyD(x,y)}$, where $\Psi(x,y)=D(x,y)[xy-y(x-1)\alpha_{1}\widehat{\alpha}_{2}-x(y-1)\alpha_{2}\widehat{\alpha}_{1}]$, where for $|x|\leq1$, $|y|\leq1$, $\Psi(x,y)$ is a generating function of a proper probability distribution. Now, for $|y|=1$, $y\neq1$ and $|x|=1$ it is clear that $|\Psi(x,y)|<1=|xy|$. Thus, from Rouch\'e's theorem, $xy-\Psi(x,y)$ has exactly one zero inside the unit circle. Therefore, $R(x,y)=0$ has exactly one root $x=X_{0}(y)$, such that $|x|<1$. For $y=1$, $R(x,1)=0$ implies $(x-1)\left(\lambda_{1}-\frac{\alpha_{1}\widehat{\alpha}_{2}}{x}\right)=0$.
Therefore, for $y=1$, and since $\lambda_{1}<\alpha_{1}\widehat{\alpha}_{2}$, the only root of $R(x,1)=0$ for $|x|\leq1$, is $x=1$.
\end{proof}

\begin{lemma}\label{lem1}
The algebraic function $Y(x)$, defined by $R(x,Y(x)) = 0$, has four real branch points $0< x_{1}<x_{2}\leq1<x_{3}<x_{4}<\frac{1+\lambda_{1}}{\lambda_{1}}$. Moreover, $D_{x}(x)<0$, $x\in(x_{1},x_{2})\cup(x_{3},x_{4})$ and $D_{x}(x)<0$, $x\in(-\infty,x_{1})\cup(x_{2},x_{3})\cup(x_{4},\infty)$. Similarly, $X(y)$, defined by $R(X(y),y) = 0$, has four real branch points $0\leq y_{1}<y_{2}\leq1<y_{3}<y_{4}<\frac{1+\lambda_{2}}{\lambda_{2}}$. Moreover, $D_{x}(y)<0$, $y\in(y_{1},y_{2})\cup(y_{3},y_{4})<$ and $D_{x}(y).0$, $y\in(-\infty,y_{1})\cup(y_{2},y_{3})\cup(y_{4},\infty)$.
\end{lemma}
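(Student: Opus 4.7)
The plan is to regard $D_x(x) = b(x)^2 - 4a(x)c(x)$ as a polynomial in $x$ whose real zeros are precisely the branch points of $Y(x)$, and to locate them by a sign-change analysis on the real line. Since $b(x)$ has degree $2$ in $x$ while $a(x)c(x)$ has degree only $3$, the leading term of $D_x$ comes from $b(x)^2$, making $D_x$ a polynomial of degree $4$ with leading coefficient $\lambda_1^2(1+\lambda_2)^2>0$, so $D_x(x)\to+\infty$ as $x\to\pm\infty$.

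First I would evaluate $D_x$ at the three natural real points $0$, $1$, and $(1+\lambda_1)/\lambda_1$. At $x=0$ both $a(0)$ and $c(0)$ vanish, giving $D_x(0)=b(0)^2=\alpha_1^2\widehat{\alpha}_2^2>0$. At $x=1$ a direct factorization yields $R(1,y)=-(\lambda_2 y-\alpha_2\widehat{\alpha}_1)(y-1)$, hence $D_x(1)=(\lambda_2-\alpha_2\widehat{\alpha}_1)^2\ge 0$, strictly positive under strict stability (cf.\ Theorem~\ref{lem}). At $x=(1+\lambda_1)/\lambda_1$ again $a$ vanishes, and a short computation shows $b((1+\lambda_1)/\lambda_1)<0$ (using the capture-regime inequality $\alpha_1\widehat{\alpha}_2+\alpha_2\widehat{\alpha}_1\le 1$, which follows from $P_{1,2/\{1,2\}}=0$ together with $\alpha_i\in[0,1]$), so $D_x((1+\lambda_1)/\lambda_1)>0$ as well.

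The crucial step --- and the main obstacle --- is to show that $D_x$ is strictly negative on an interior subinterval of each of $(0,1)$ and $(1,(1+\lambda_1)/\lambda_1)$. For this I would use the elementary observation that on any subinterval where $Y_\pm(x)$ are real, their product $Y_+(x)Y_-(x)=c(x)/a(x)$ is strictly positive throughout $(0,(1+\lambda_1)/\lambda_1)$ (because $a(x),c(x)<0$ there), so the two real roots have the same sign. Small-$a$ asymptotics (using $Y_\pm\sim -c/b$ or $-b/a$ when $a$ is near zero) then show: as $x\to 0^+$ one root behaves like $-Bx/(\alpha_1\widehat{\alpha}_2)\to 0^-$ and the other like $-\alpha_1\widehat{\alpha}_2/(\lambda_2(1+\lambda_1)x)\to-\infty$, so both roots are negative; at $x=1$ the two roots are $1$ and $\alpha_2\widehat{\alpha}_1/\lambda_2>1$, both positive; and as $x\to((1+\lambda_1)/\lambda_1)^-$, using $b<0$ there, both roots are again negative. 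A continuous pair of real roots with strictly positive product cannot change its common sign, because that would force them to cross zero simultaneously and hence $c(x)=0$, i.e.\ $x=0$, which is outside both intervals. Therefore the branches $Y_\pm(x)$ must fail to be real on some subinterval of each of $(0,1)$ and $(1,(1+\lambda_1)/\lambda_1)$, i.e.\ $D_x<0$ there.

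Combining the positivity of $D_x$ at $0,1,(1+\lambda_1)/\lambda_1,\pm\infty$ with the two interior negativity intervals, the intermediate value theorem locates exactly four real roots ordered as $0<x_1<x_2\le 1<x_3<x_4<(1+\lambda_1)/\lambda_1$; since $\deg D_x=4$, these exhaust the zeros, and the stated sign pattern $D_x<0$ on $(x_1,x_2)\cup(x_3,x_4)$ and $D_x>0$ on the complement follows by continuity. The symmetric statement for $X(y)$ follows by the parallel argument after exchanging the roles of $(x,\lambda_1,\alpha_1\widehat{\alpha}_2)$ and $(y,\lambda_2,\alpha_2\widehat{\alpha}_1)$.
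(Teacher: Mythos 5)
Your proposal is correct, and it actually supplies more than the paper does: the paper's ``proof'' of this lemma is a single sentence stating that the result follows from ``simple algebraic arguments'' with details omitted, so there is no argument in the paper to compare against. What you give is the standard discriminant analysis from the boundary-value-problem literature the authors cite (Cohen--Boxma, Fayolle et al.): observe that $D_x$ is a quartic with positive leading coefficient, verify $D_x>0$ at $x=0$, $x=1$, $x=(1+\lambda_1)/\lambda_1$ and at $\pm\infty$, and force $D_x<0$ on an interior subinterval of each of $(0,1)$ and $\bigl(1,(1+\lambda_1)/\lambda_1\bigr)$ so that the intermediate value theorem, together with $\deg D_x=4$, pins down exactly four real zeros with the stated interlacing and sign pattern (the second displayed sign in the lemma is a typo for $D_x>0$, which is what you prove). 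I checked the computations: $D_x(0)=\alpha_1^2\widehat{\alpha}_2^2$, $D_x(1)=(\lambda_2-\alpha_2\widehat{\alpha}_1)^2$, and $b\bigl((1+\lambda_1)/\lambda_1\bigr)=\tfrac{1+\lambda_1}{\lambda_1}(\alpha_1\widehat{\alpha}_2+\alpha_2\widehat{\alpha}_1-1)-\alpha_1\widehat{\alpha}_2<0$ using $\alpha_1\widehat{\alpha}_2+\alpha_2\widehat{\alpha}_1\le 1-\bar{\alpha}_1\bar{\alpha}_2\le 1$ in the capture case, all as you claim. The one step a careless reader might query is the ``constant common sign'' argument: if $D_x\ge 0$ held throughout $(0,1]$, the unordered pair $\{Y_-(x),Y_+(x)\}$ would vary continuously (even across isolated double roots), with product $c(x)/a(x)>0$ never vanishing, so the common sign would be locally constant and hence constant on the interval --- contradicting negativity of both roots near $0^+$ and positivity of both at $x=1$; this is airtight, and the same reasoning works on $\bigl(1,(1+\lambda_1)/\lambda_1\bigr)$. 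Your proof is a valid and complete replacement for the omitted one.
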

\begin{proof}
The proof is based on simple algebraic arguments and further details are omitted due to space limitations.
\end{proof}

To ensure the continuity of the function two valued function $Y(x)$ (resp. $X(y)$) we consider the following cut planes: $\doubletilde{C}_{x}=C_{x}-([x_{1},x_{2}]\cup[x_{3},x_{4}]$, $\doubletilde{C}_{y}=C_{y}-([y_{1},y_{2}]\cup[y_{3},y_{4}]$, where $C_{x}$, $C_{y}$ the complex planes of $x$, $y$, respectively. In $\doubletilde{C}_{x}$ (resp. $\doubletilde{C}_{y}$), denote by $Y_{0}(x)$ (resp. $X_{0}(y)$) the zero of $R(x,Y(x))=0$ (resp. $R(X(y),y)=0$) with the smallest modulus, and $Y_{1}(x)$ (resp. $X_{1}(y)$) the other one. 

Define the image contours, $\mathcal{L}=Y_{0}[\overrightarrow{\underleftarrow{x_{1},x_{2}}}]$, $\mathcal{M}=X_{0}[\overrightarrow{\underleftarrow{y_{1},y_{2}}}]$, where $[\overrightarrow{\underleftarrow{u,v}}]$ stands for the contour traversed from $u$ to $v$ along the upper edge of the slit $[u,v]$ and then back to $u$ along the lower edge of the slit. 
The following lemma shows that the mappings $Y(x)$, $X(y)$, for $x\in[x_{1},x_{2}]$, $y\in[y_{1},y_{2}]$ respectively, give rise to the smooth and closed contours $\mathcal{L}$, $\mathcal{M}$ respectively.
\begin{lemma}\label{sq}\begin{enumerate}\item For $y\in[y_{1},y_{2}]$, the algebraic function $X(y)$ lies on a closed contour $\mathcal{M}$, which is symmetric with respect to the real line and defined by $|x|^{2}=m(Re(x))$, $m(\delta)=\frac{\alpha_{1}\widehat{\alpha}_{2}}{\lambda_{1}(1+\lambda_{2}-\lambda_{2}\zeta(\delta))}$, and $|x|^{2}\leq\frac{\alpha_{1}\widehat{\alpha}_{2}}{\lambda_{1}(1+\lambda_{2}-\lambda_{2}y_{2})}$,
where, $\zeta(\delta)=\frac{k(\delta)-\sqrt{k^{2}(\delta)-4\alpha_{2}\widehat{\alpha}_{1}(\lambda_{2}(1+\lambda_{1}(1-2\delta)))}}{2\lambda_{2}(1+\lambda_{1}(1-2\delta))}$, $k(\delta):=\lambda+\lambda_{1}\lambda_{2}+\alpha_{1}\widehat{\alpha}_{2}+\alpha_{2}\widehat{\alpha}_{1}-2\lambda_{1}(1+\lambda_{2})\delta$. 
Set $\beta_{0}:=\sqrt{\frac{\alpha_{1}\widehat{\alpha}_{2}}{\lambda_{1}(1+\lambda_{2}-\lambda_{2}y_{2})}}$, $\beta_{1}=-\sqrt{\frac{\alpha_{1}\widehat{\alpha}_{2}}{\lambda_{1}(1+\lambda_{2}-\lambda_{2}y_{1})}}$ the extreme right and left point of $\mathcal{M}$, respectively.
\item For $x\in[x_{1},x_{2}]$, the algebraic function $Y(x)$ lies on a closed contour $\mathcal{L}$, which is symmetric with respect to the real line and defined by $|y|^{2}=v(Re(y)),\,v(\delta)=\frac{\alpha_{2}\widehat{\alpha}_{1}}{\lambda_{2}(1+\lambda_{1}-\lambda_{1}\theta(\delta))}$, $|y|^{2}\leq\frac{\alpha_{2}\widehat{\alpha}_{1}}{\lambda_{2}(1+\lambda_{1}-\lambda_{1}x_{2})}$,
where $\theta(\delta)=\frac{l(\delta)-\sqrt{l^{2}(\delta)-4\alpha_{1}\widehat{\alpha}_{2}(\lambda_{1}(1+\lambda_{2}(1-2\delta)))}}{2\lambda_{1}(1+\lambda_{2}(1-2\delta))}$, $l(\delta):=\lambda+\lambda_{1}\lambda_{2}+\alpha_{1}\widehat{\alpha}_{2}+\alpha_{2}\widehat{\alpha}_{1}-2\lambda_{2}(1+\lambda_{1})\delta$.
Set $\eta_{0}:=\sqrt{\frac{\alpha_{2}\widehat{\alpha}_{1}}{\lambda_{2}(1+\lambda_{1}-\lambda_{1}x_{2})}}$, $\eta_{1}=-\sqrt{\frac{\alpha_{2}\widehat{\alpha}_{1}}{\lambda_{2}(1+\lambda_{1}-\lambda_{1}x_{1})}}$ the extreme right and left point of $\mathcal{L}$, respectively.
\end{enumerate}
\end{lemma}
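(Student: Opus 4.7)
The plan is to exploit the fact that on the branch cut $[y_1,y_2]$ the discriminant $D_y(y)$ is non-positive (by Lemma~\ref{lem1}), so the two roots $X_0(y)$ and $X_1(y)$ of $R(x,y)=0$ in the variable $x$ form a complex-conjugate pair. Consequently, both their product and their sum are real-valued functions of $y$, and Vieta's formulas give
\begin{equation*}
|X_0(y)|^2 \;=\; X_0(y)\,X_1(y) \;=\; \frac{\widehat{c}(y)}{\widehat{a}(y)} \;=\; \frac{\alpha_1\widehat{\alpha}_2}{\lambda_1(1+\lambda_2-\lambda_2 y)},\qquad 2\,Re(X_0(y)) \;=\; -\frac{\widehat{b}(y)}{\widehat{a}(y)}.
\end{equation*}
The first identity already implies that $\mathcal{M}$ is symmetric with respect to the real axis and suggests the form $|x|^2=m(Re(x))$ once $y$ is expressed through $\delta:=Re(X_0(y))$.

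The next step is to invert the relation $2\delta=-\widehat{b}(y)/\widehat{a}(y)$ to obtain $y$ as a function of $\delta$. Substituting the explicit forms of $\widehat{a}$ and $\widehat{b}$ and rearranging produces the quadratic
\begin{equation*}
\lambda_2\bigl(1+\lambda_1(1-2\delta)\bigr)\,y^2 \;-\; k(\delta)\,y \;+\; \alpha_2\widehat{\alpha}_1 \;=\; 0,
\end{equation*}
from which one selects the branch $y=\zeta(\delta)$ (the root with the minus sign) by continuity and by matching against the endpoints $y=y_1,y_2$, where $D_y=0$ forces a unique real value of $x$. Inserting $y=\zeta(\delta)$ back into the expression for $|X_0(y)|^2$ delivers the stated formula $|x|^2=m(Re(x))$ along $\mathcal{M}$.

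To establish that $\mathcal{M}$ is a smooth closed contour, I would argue that as $y$ traces the slit $[y_1,y_2]$ along its upper and lower edges, $X_0(y)$ sweeps continuously through the upper and lower halves of $\mathcal{M}$, meeting the real line precisely at $y=y_1$ and $y=y_2$ (where the conjugate pair collapses). These two meeting points furnish the extremes of $\mathcal{M}$: at $y=y_2$ one gets the rightmost value $\beta_0$, at $y=y_1$ the leftmost value $\beta_1$, and the bound $|x|^2\le\beta_0^2$ then follows from the monotonicity of the map $y\mapsto \alpha_1\widehat{\alpha}_2/[\lambda_1(1+\lambda_2-\lambda_2 y)]$ on $[y_1,y_2]$. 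Smoothness is inherited from the analyticity of $X_0$ away from the branch points, while the square-root behaviour of $X_0$ at the simple branch points $y_1,y_2$ guarantees that the upper and lower arcs glue into a Jordan curve without cusps. Part 2 of the lemma, concerning $\mathcal{L}$ and $Y(x)$ on $[x_1,x_2]$, is entirely symmetric: one simply interchanges $(x,y)$ and the coefficient pairs $(\widehat{a},a),(\widehat{b},b),(\widehat{c},c)$.

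The main obstacle I expect is the careful verification that the branch $\zeta(\delta)$ with the minus sign is the one consistent with $y\in[y_1,y_2]$ over the full range $\delta\in[\beta_1,\beta_0]$, and that $\mathcal{M}$ is traversed without self-intersection. Both points are essentially algebraic, but they rely on the strict sign information for $D_x,D_y$ supplied by Lemma~\ref{lem1} together with the uniqueness clause for the small-modulus root $X_0(y)$ proved earlier; once these are invoked, the parametrization $\delta\mapsto(\delta,\pm\sqrt{m(\delta)-\delta^2})$ yields the closed contour claimed in the statement.
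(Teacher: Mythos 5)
Your proposal is correct and follows essentially the same route as the paper: on the cut $[y_1,y_2]$ the discriminant is negative so $X_0,X_1$ are complex conjugates, Vieta's formulas give $|X(y)|^2=\widehat{c}(y)/\widehat{a}(y)=\alpha_1\widehat{\alpha}_2/[\lambda_1(1+\lambda_2-\lambda_2 y)]$ and $Re(X(y))=-\widehat{b}(y)/(2\widehat{a}(y))$, the bound follows from monotonicity of $y\mapsto g(y)$ on $[0,1]$, and $\zeta(\delta)$ is obtained by inverting the real-part relation and selecting the root in $[0,1]$. Your additional remarks on the branch selection and on the gluing of the upper and lower arcs into a Jordan curve only elaborate points the paper leaves implicit.
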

\begin{proof}
We will prove the part related to $\mathcal{M}$. Similarly, we can also prove part 2. For $y\in[y_{1},y_{2}]$, $D_{y}(y)$ is negative, so $X_{0}(y)$ and $X_{1}(y)$ are complex conjugates. It also follows that
\begin{equation}
\begin{array}{l}
Re(X(y))=\frac{y(\lambda+\lambda_{1}\lambda_{2}+\alpha_{1}\widehat{\alpha}_{2}+\alpha_{2}\widehat{\alpha}_{1})-\alpha_{2}\widehat{\alpha}_{1}-\lambda_{2}(1+\lambda_{1})y^{2}}{2\lambda_{1}y(1+\lambda_{2}-\lambda_{2}y)}.
\end{array}
\label{rd1}
\end{equation} 
Therefore, $|X(y)|^{2}=\frac{\alpha_{1}\widehat{\alpha}_{2}}{\lambda_{1}(1+\lambda_{2}-\lambda_{2}y)}=g(y)$. Clearly, $g(y)$ is an increasing function for $y\in[0,1]$ and thus, $|X(y)|^{2}\leq g(y_{2})$. Finally, $\zeta(\delta)$ is derived by solving (\ref{rd1}) for $y$ with $\delta = Re(X(y))$, and taking the solution such that $y\in[0,1]$.
\end{proof}
\section{The boundary value problems}\label{sec:bound}
As indicated in the previous section, based on a relation between the transmission probabilities of the users, we distinguish the analysis in two cases, which differ both from the modeling and the technical point of view. In this section we consider the case where $P_{1,2/\{1,2\}}=0$.
\subsection{A Dirichlet boundary value problem}
Assume now that $\frac{\alpha_{1}\widehat{\alpha}_{2}}{\alpha_{1}^{*}\tilde{P}_{1/\{1\}}}+\frac{\alpha_{2}\widehat{\alpha}_{1}}{\alpha_{2}^{*}\tilde{P}_{2/\{2\}}}=1$. Then, $A(x,y)=\frac{d_{1}}{\alpha_{1}\widehat{\alpha}_{2}}B(x,y)\Leftrightarrow A(x,y)=\frac{\alpha_{2}\widehat{\alpha}_{1}}{d_{2}}B(x,y)$.
Therefore, for $y\in \mathcal{D}_{y}=\{y\in\mathcal{C}:|y|\leq1,|X_{0}(y)|\leq1\}$,
\begin{equation}
\begin{array}{l}
\alpha_{2}\widehat{\alpha}_{1}H(X_{0}(y),0)+d_{2}H(0,y)+\frac{\alpha_{2}\widehat{\alpha}_{1}C(X_{0}(y),y)}{A(X_{0}(y),y)}(1-\rho)=0.
\end{array}
\label{con}
\end{equation}
For $y\in \mathcal{D}_{y}-[y_{1},y_{2}]$ both $H(X_{0}(y),0)$, $H(0,y)$ are analytic and the right-hand side can be analytically continued up to the slit $[y_{1},y_{2}]$, or equivalently,
\begin{equation}
\begin{array}{c}
\alpha_{2}\widehat{\alpha}_{1}H(x,0)+d_{2}H(0,Y_{0}(x))+\frac{\alpha_{2}\widehat{\alpha}_{1}C(x,Y_{0}(x))}{A(x,Y_{0}(x))}(1-\rho)=0,\,x\in\mathcal{M}.
\end{array}
\label{con2}
\end{equation}
Then, multiplying both sides of (\ref{con2}) by the imaginary complex number $i$, and noticing that $H(0,Y_{0}(x))$ is real for $x\in\mathcal{M}$, since $Y_{0}(x)\in[y_{1},y_{2}]$, we have
\begin{equation}
\begin{array}{c}
Re(iH(x,0))=Re\left(-i\frac{C(x,Y_{0}(x))}{A(x,Y_{0}(x))}\right)(1-\rho),\,x\in\mathcal{M}.
\end{array}
\label{p1}
\end{equation}
Clearly, some analytic continuation considerations must be made in order to have everything well defined. Thus, we have to check for poles of $H(x,0)$ in $S_{x}:=G_{\mathcal{M}}\cap\bar{D}_{x}^{c}$, where $G_{\mathcal{U}}$ be the interior domain bounded by $\mathcal{U}$, and $D_{x}=\{x:|x|<1\}$, $\bar{D}_{x}=\{x:|x|\leq1\}$, $\bar{D}_{x}^{c}=\{x:|x|>1\}$. These poles, if exist, they coincide with the zeros of $A(x,Y_{0}(x))$ in $S_{x}$ (see Appendix). In order to solve (\ref{p1}), we must first conformally transform the problem from $\mathcal{M}$ to the unit circle $\mathcal{C}$. Let the conformal mapping, $z=\gamma(x):G_{\mathcal{M}}\to G_{\mathcal{C}}$, and its inverse $x=\gamma_{0}(z):G_{\mathcal{C}}\to G_{\mathcal{M}}$. 

Then, we have the following problem: Find a function $\tilde{T}(z)=H^{(0)}(\gamma_{0}(z))$ regular for $z\in G_\mathcal{C}$, and continuous for $z\in\mathcal{C}\cup G_\mathcal{C}$ such that, $Re(i\tilde{T}(z))=w(\gamma_{0}(z))$, $z\in\mathcal{C}$. To proceed, we need a representation of $\mathcal{M}$ in polar coordinates, i.e., $\mathcal{M}=\{x:x=\rho(\phi)\exp(i\phi),\phi\in[0,2\pi]\}.$ This procedure is described in detail in \cite{coh}. In the following we summarize the basic steps: Since $0\in G_{\mathcal{M}}$, for each $x\in\mathcal{M}$, a relation between its absolute value and its real part is given by $|x|^{2}=m(Re(x))$ (see Lemma \ref{sq}). Given the angle $\phi$ of some point on $\mathcal{M}$, the real part of this point, say $\delta(\phi)$, is the solution of $\delta-\cos(\phi)\sqrt{m(\delta)}$, $\phi\in[0,2\pi].$ Since $\mathcal{M}$ is a smooth, egg-shaped contour, the solution is unique. Clearly, $\rho(\phi)=\frac{\delta(\phi)}{\cos(\phi)}$, and the parametrization of $\mathcal{M}$ in polar coordinates is fully specified. Then, the mapping from $z\in G_{\mathcal{C}}$ to $x\in G_{\mathcal{M}}$, where $z = e^{i\phi}$ and $x= \rho(\psi(\phi))e^{i\psi(\phi)}$, satisfying $\gamma_{0}(0)=0$ and $\gamma_{0}(z)=\overline{\gamma_{0}(z)}$ is uniquely determined by (see \cite{coh}, Section I.4.4),
\begin{equation}
\begin{array}{rl}
\gamma_{0}(z)=&z\exp[\frac{1}{2\pi}\int_{0}^{2\pi}\log\{\rho(\psi(\omega))\}\frac{e^{i\omega}+z}{e^{i\omega}-z}d\omega],\,|z|<1,\\
\psi(\phi)=&\phi-\int_{0}^{2\pi}\log\{\rho(\psi(\omega))\}\cot(\frac{\omega-\phi}{2})d\omega,\,0\leq\phi\leq 2\pi,
\end{array}
\label{zx}
\end{equation}
i.e., $\psi(.)$ is uniquely determined as the solution of a Theodorsen integral equation with $\psi(\phi)=2\pi-\psi(2\pi-\phi)$. Due to the correspondence-boundaries theorem, $\gamma_{0}(z)$ is continuous in $\mathcal{C}\cup G_{\mathcal{C}}$. 

In the case that $H(x,0)$ has no poles in $S_{x}$, the solution of the problem defined in (\ref{p1}) is:
\begin{equation}
\begin{array}{c}
H(x,0)=-\frac{1-\rho}{2\pi}\int_{|t|=1}f(t)\frac{t+\gamma(x)}{t-\gamma(x)}\frac{dt}{t}+C,\,x\in\mathcal{M},
\end{array}
\label{sol1}
\end{equation}
where $f(t)=Re\left(-i\frac{C(\gamma_{0}(t),Y_{0}(\gamma_{0}(t)))}{A(\gamma_{0}(t),Y_{0}(\gamma_{0}(t)))}\right)$, $C$ is a constant that can be defined by setting $x=0\in G_{\mathcal{M}}$ in (\ref{sol1}) and using the fact that $H(0,0)=1-\rho$, $\gamma(0)=0$. In the case that $H(x,0)$ has a pole, it will be $x=\bar{x}$ (see Appendix), and we still have a Dirichlet problem for the function $(x- \bar{x})H(x,0)$.

Following the discussion above, $C=(1-\rho)\left(1+\frac{1}{2\pi}\int_{|t|=1}f(t)\frac{dt}{t}\right)$.
Setting $t=e^{i\phi}$, $\gamma_{0}(e^{i\phi})=\rho(\psi(\phi))e^{i\psi(\phi)}$, we obtain after some algebra,
\begin{displaymath}
f(e^{i\phi})=\frac{d_{1}\alpha_{2}^{*}\sin(\psi(\phi))(1-Y_{0}(\gamma_{0}(e^{i\phi}))^{-1})}{\rho(\psi(\phi))\left\{\left[\alpha_{2}\widehat{\alpha}_{1}(1-Y_{0}^{-1}(\gamma_{0}(e^{i\phi})))+d_{1}\left(1-\frac{\cos(\psi(\phi))}{\rho(\psi(\phi))}\right)\right]^{2}+\left(d_{1}\frac{\sin(\psi(\phi))}{\rho(\psi(\phi))}\right)^{2}\right\}},
\end{displaymath}
which is an odd function of $\phi$. Thus, $C=1-\rho$. Substituting back in (\ref{sol1}) we deduce after simple calculations 
\begin{equation}
\begin{array}{c}
H(x,0)=(1-\rho)\left\{1+\frac{2\gamma(x)i}{\pi}\int_{0}^{\pi}\frac{f(e^{i\phi})\sin(\phi)}{1-2\gamma(x)\cos(\phi)-\gamma(x)^{2}}\right\},\,x\in G_{\mathcal{M}}.
\end{array}
\label{soll}
\end{equation}
A detailed numerical approach in order to obtain the inverse mapping $\gamma(x)$ is presented in the seminal book \cite{coh}. Similarly, we can determine $H(0,y)$ by solving another Dirichlet boundary value problem on the closed contour $\mathcal{L}$. Then, using the fundamental functional equation (\ref{we}) we uniquely obtain $H(x,y)$.
\subsection{A homogeneous Riemann-Hilbert boundary value problem}
We now assume that $\frac{\alpha_{1}\widehat{\alpha}_{2}}{\alpha_{1}^{*}\tilde{P}_{1/\{1\}}}+\frac{\alpha_{2}\widehat{\alpha}_{1}}{\alpha_{2}^{*}\tilde{P}_{2/\{2\}}}\neq1$. In such a case we consider the following transformation:
\begin{displaymath}
\begin{array}{lr}
G(x):=H(x,0)+\frac{\alpha_{1}^{*}\tilde{P}_{1/\{1\}}d_{2}H(0,0)}{d_{1}d_{2}-\alpha_{1}\widehat{\alpha}_{2}\alpha_{2}\widehat{\alpha}_{1}},&\,
L(y):=H(0,y)+\frac{\alpha_{2}^{*}\tilde{P}_{2/\{2\}}d_{1}H(0,0)}{d_{1}d_{2}-\alpha_{1}\widehat{\alpha}_{2}\alpha_{2}\widehat{\alpha}_{1}}.
\end{array}
\end{displaymath}
Then, for $y\in D_{y}$, (\ref{we}) yields $A(X_{0}(y),y)G(X_{0}(y))=-B(X_{0}(y),y)L(y)$. For $y\in D_{y}-[y_{1},y_{2}]$ both $G(X_{0}(y))$, $L(y)$ are analytic and the right-hand side can be analytically continued up to the slit
$[y_1, y_2]$, or equivalently for $x\in\mathcal{M}$,
\begin{equation}
\begin{array}{c}
A(x,Y_{0}(x))G(x)=-B(x,Y_{0}(x))L(Y_{0}(x)).
\end{array}
\label{za1}
\end{equation}
Clearly, $G(x)$ is holomorphic in $D_{x}$, continuous in $\bar{D}_{x}$. However, $G(x)$ might has poles, based on the values of the system parameters in $S_{x}=G_{\mathcal{M}}\cap\bar{D}_{x}^{c}$. These poles (if exist) coincide with the zeros of $A(x,Y_{0}(x))$ in $S_{x}$; see Appendix. For $y\in[y_{1},y_{2}]$, let $X_{0}(y)=x\in\mathcal{M}$ and realize that $Y_{0}(X_{0}(y))=y$ so that $y=Y_{0}(x)$ (note that following \cite{avr} $B(x,Y_{0}(x))\neq0$, $x\in\mathcal{M}$). Taking into account the possible poles of $G(x)$, and noticing that $L(Y_{0}(x))$ is real for $x\in\mathcal{M}$, since $Y_{0}(x)\in[y_{1},y_{2}]$, we have
\begin{equation}
\begin{array}{c}
Re[iU(x)\tilde{G}(x)]=0,\,x\in\mathcal{M},\vspace{2mm}\\
U(x)=\frac{A(x,Y_{0}(x))}{(x-\bar{x})^{r}B(x,Y_{0}(x))},\,\,\tilde{G}(x)=(x-\bar{x})^{r}G(x),
\end{array}
\label{df3}
\end{equation}
where $r=0,1$, whether $\bar{x}$ is zero or not of $A(x,Y_{0}(x))$ in $S_{x}$. Thus, $\tilde{G}(x)$ is regular for $x\in G_{\mathcal{M}}$, continuous for $x\in\mathcal{M}\cup G_{\mathcal{M}}$, and $U(x)$ is a non-vanishing function on $\mathcal{M}$. We must first conformally transform the problem (\ref{df3}) from $\mathcal{M}$ to the unit circle $\mathcal{C}$. Let the conformal mapping $z=\gamma(x):G_{\mathcal{M}}\to G_{\mathcal{C}}$, and its inverse given by $x=\gamma_{0}(z):G_{\mathcal{C}}\to G_{\mathcal{M}}$. 

Then, the Riemann-Hilbert problem formulated in (\ref{df3}) is reduced to the following: Find a function $F(z):=\tilde{H}(\gamma_{0}(z))$, regular in $G_{\mathcal{C}}$, continuous in $G_{\mathcal{C}}\cup\mathcal{C}$ such that, $Re[iU(\gamma_{0}(z))F(z)]=0,\,z\in\mathcal{C}$.

A crucial step in the solution of the problem defined by (\ref{df3}) is the determination of the index $\chi=\frac{-1}{\pi}[arg\{U(x)\}]_{x\in \mathcal{M}}$, where $[arg\{U(x)\}]_{x\in \mathcal{M}}$, denotes the variation of the argument of the function $U(x)$ as $x$ moves along the closed contour $\mathcal{M}$ in the positive direction, provided that $U(x)\neq0$, $x\in\mathcal{M}$. Following the lines in \cite{fay} we have,
\begin{lemma}\begin{enumerate}
\item If $\lambda_{2}<\alpha_{2}\widehat{\alpha}_{1}$, then $\chi=0$ is equivalent to
\begin{displaymath}
\begin{array}{l}
\frac{d A(x,Y_{0}(x))}{dx}|_{x=1}=\alpha_{2}\widehat{\alpha}_{1}\frac{\alpha_{1}\widehat{\alpha}_{2}-\lambda_{1}}{\lambda_{2}-\alpha_{2}\widehat{\alpha}_{1}}+d_{1}<0\Leftrightarrow\lambda_{1}<\alpha_{1}^{*}\tilde{P}_{1/\{1\}}+\widehat{d}_{1}\frac{\lambda_{2}}{\alpha_{2}\widehat{\alpha}_{1}},\vspace{2mm}\\ \frac{d B(X_{0}(y),y)}{dy}|_{y=1}=\alpha_{1}\widehat{\alpha}_{2}\frac{\alpha_{2}\widehat{\alpha}_{1}-\lambda_{2}}{\lambda_{1}-\alpha_{1}\widehat{\alpha}_{2}}+d_{2}<0\Leftrightarrow\lambda_{2}<\alpha_{2}^{*}\tilde{P}_{2/\{2\}}+\widehat{d}_{2}\frac{\lambda_{1}}{\alpha_{1}\widehat{\alpha}_{2}}.
\end{array}
\end{displaymath}
\item If $\lambda_{2}\geq\alpha_{2}\widehat{\alpha}_{1}$, $\chi=0$ is equivalent to $\frac{d B(X_{0}(y),y)}{dy}|_{y=1}<0\Leftrightarrow \lambda_{2}<\alpha_{2}^{*}\tilde{P}_{2/\{2\}}+\widehat{d}_{2}\frac{\lambda_{1}}{\alpha_{1}\widehat{\alpha}_{2}}$.
\end{enumerate}
\end{lemma}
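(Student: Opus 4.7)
The plan follows the standard index computation for Riemann--Hilbert problems on closed symmetric contours, as developed for queueing applications in the line of work cited as \cite{fay}. Recall that the index $\chi=-\frac{1}{\pi}[\arg\{U(x)\}]_{x\in\mathcal{M}}$ is the total winding of $U(x)=A(x,Y_{0}(x))/[(x-\bar{x})^{r}B(x,Y_{0}(x))]$ as $x$ traces the closed contour $\mathcal{M}$ once in the positive sense. By the argument principle this winding is determined by the zeros and poles of $A(x,Y_{0}(x))$ and $B(x,Y_{0}(x))$ inside $G_{\mathcal{M}}$; the factor $(x-\bar{x})^{r}$ has been inserted precisely to neutralize the pole of $G(x)$ and does not itself affect the count. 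The plan is to locate these zeros, compute their winding contributions, and translate the resulting sign conditions into the explicit arrival-rate inequalities displayed in the statement.

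The first step is to identify the common zero at $(x,y)=(1,1)$. By the conservation-of-flow identities \eqref{r1}--\eqref{r2}, together with $Y_{0}(1)=1$ (valid because $\lambda_{1}<\alpha_{1}\widehat{\alpha}_{2}$ from the earlier lemma), both $A(x,Y_{0}(x))$ and $B(X_{0}(y),y)$ vanish at the corresponding distinguished point. The central question is whether $x=1$ lies in $G_{\mathcal{M}}$, equivalently whether $y=1$ lies strictly above the slit $[y_{1},y_{2}]$, and this is governed by the sign of $\lambda_{2}-\alpha_{2}\widehat{\alpha}_{1}$. In case 1 the inequality is strict, so $y=1>y_{2}$ and $x=1$ is interior to $\mathcal{M}$, hence both $A$ and $B$ contribute to the winding; in case 2 the point $y=1$ reaches the branch point $y_{2}$, the zero of $A$ migrates to the contour itself, and only the contribution of $B$ survives.

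The second step is the local analysis at $x=1$. Since $A$ and $B$ have simple zeros there, $\arg U(x)$ near $x=1$ behaves like $\arg[A'(1)(x-1)]-\arg[B'(1)(x-1)]$, so the winding depends on the joint signs of these derivatives along the tangent directions to $\mathcal{M}$. Computing $\frac{d}{dx}A(x,Y_{0}(x))|_{x=1}$ by implicit differentiation of $R(x,Y_{0}(x))=0$ at $(1,1)$ gives $Y_{0}'(1)=(\alpha_{1}\widehat{\alpha}_{2}-\lambda_{1})/(\lambda_{2}-\alpha_{2}\widehat{\alpha}_{1})$, and substituting into $A$ produces exactly $\alpha_{2}\widehat{\alpha}_{1}(\alpha_{1}\widehat{\alpha}_{2}-\lambda_{1})/(\lambda_{2}-\alpha_{2}\widehat{\alpha}_{1})+d_{1}$. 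Straightforward algebra using the definitions of $d_{1},\widehat{d}_{1}$ and $\alpha_{1}^{*}\tilde{P}_{1/\{1\}}$ then produces the equivalence
\begin{equation*}
\frac{d A(x,Y_{0}(x))}{dx}\Big|_{x=1}<0 \;\Longleftrightarrow\; \lambda_{1}<\alpha_{1}^{*}\tilde{P}_{1/\{1\}}+\widehat{d}_{1}\frac{\lambda_{2}}{\alpha_{2}\widehat{\alpha}_{1}},
\end{equation*}
and the analogous computation applied to $B(X_{0}(y),y)$ at $y=1$ yields the companion equivalence.

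Combining the two sign conditions via the argument principle then gives $\chi=0$ precisely when both derivatives are negative in case 1, and precisely when only the $B$-derivative is negative in case 2. The main technical obstacle will be the careful bookkeeping of the zero that sits on the contour itself when $y_{2}=1$: boundary zeros must be counted with the appropriate half-weight and with consistent orientation with respect to the smooth traversal of $\mathcal{M}$, and it is precisely this subtlety that separates the two regimes and forces the dichotomy stated in the lemma. This is the step where the detailed conformal-mapping/index framework of \cite{fay} is indispensable, and where one must verify directly that no additional spurious zeros of $A(x,Y_{0}(x))$ or $B(x,Y_{0}(x))$ enter $G_{\mathcal{M}}$ under the stated stability assumptions.
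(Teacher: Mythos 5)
The paper itself offers no proof of this lemma (it simply defers to \cite{fay}), so I am judging your proposal on its own terms. Its computational core is sound: the implicit differentiation of $R(x,Y_{0}(x))=0$ at $(1,1)$ giving $Y_{0}'(1)=(\alpha_{1}\widehat{\alpha}_{2}-\lambda_{1})/(\lambda_{2}-\alpha_{2}\widehat{\alpha}_{1})$, the resulting expression for $\frac{d}{dx}A(x,Y_{0}(x))|_{x=1}$, and the algebraic translation into $\lambda_{1}<\alpha_{1}^{*}\tilde{P}_{1/\{1\}}+\widehat{d}_{1}\lambda_{2}/(\alpha_{2}\widehat{\alpha}_{1})$ (using $d_{1}=\alpha_{1}\widehat{\alpha}_{2}-\alpha_{1}^{*}\tilde{P}_{1/\{1\}}$ and $\widehat{d}_{1}=d_{1}$ in the capture case) are all correct, as is the companion computation for $B$.

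There are, however, two genuine gaps. First, you cannot invoke the argument principle for $A(x,Y_{0}(x))$ and $B(x,Y_{0}(x))$ ``inside $G_{\mathcal{M}}$'': $Y_{0}(x)$ is only defined and analytic in the cut plane, and the slit $[x_{1},x_{2}]$ lies inside $G_{\mathcal{M}}$, so these compositions are not meromorphic there and their interior zeros/poles do not control the winding. The variation of $\arg U$ must instead be computed directly on $\mathcal{M}$: for $x\in\mathcal{M}$ one has $Y_{0}(x)\in[y_{1},y_{2}]$ real, so $\mathrm{Im}\,A(x,Y_{0}(x))=d_{1}\,\mathrm{Im}(-1/x)$ vanishes only at the two real extreme points $\beta_{1}<0<\beta_{0}$ of $\mathcal{M}$; the closed image curve is symmetric about the real axis and its winding number is $0$ or $\pm1$ according to whether $A(\beta_{1},y_{1})$ and $A(\beta_{0},y_{2})$ have equal or opposite signs. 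One then shows $A(\beta_{1},y_{1})<0$ always, and connects the sign of $A(\beta_{0},y_{2})$ to the sign of $\frac{d}{dx}A(x,Y_{0}(x))|_{x=1}$ by a monotonicity/intermediate-value argument on the real segment joining $\beta_{0}$ and $1$, using the Appendix's resultant computation to exclude further real zeros. Second, your mechanism for the dichotomy is misidentified: whether $y=1$ coincides with the branch point $y_{2}$ is governed by $\lambda_{1}$ versus $\alpha_{1}\widehat{\alpha}_{2}$ (since $D_{y}(1)=(\lambda_{1}-\alpha_{1}\widehat{\alpha}_{2})^{2}$), not by $\lambda_{2}$ versus $\alpha_{2}\widehat{\alpha}_{1}$. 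The actual source of the case split is that $R(1,y)=0$ has the two roots $y=1$ and $y=\alpha_{2}\widehat{\alpha}_{1}/\lambda_{2}$: when $\lambda_{2}<\alpha_{2}\widehat{\alpha}_{1}$ one has $Y_{0}(1)=1$, so $A(1,Y_{0}(1))=0$ and the derivative condition on $A$ is genuinely needed; when $\lambda_{2}\geq\alpha_{2}\widehat{\alpha}_{1}$ the branch of smallest modulus switches, $A(1,Y_{0}(1))=\alpha_{2}\widehat{\alpha}_{1}-\lambda_{2}\leq0$ is automatically nonpositive, and only the condition on $B$ survives. There is no boundary zero to count ``with half-weight'' in case 2; the zero simply migrates to the other branch $Y_{1}$. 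Until these two steps are repaired, the plan does not yet constitute a proof.
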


Therefore, under stability conditions (see Lemma \ref{lem}) the problem defined in (\ref{df3}) has a unique solution given by,
\begin{equation}
\begin{array}{rl}
H(x,0)=&D(x-\bar{x})^{r}\exp\left[\frac{1}{2i\pi}\int_{|t|=1}\frac{\log\{J(t)\}}{t-\gamma(x)}dt\right]-\frac{\alpha_{1}^{*}\tilde{P}_{1/\{1\}}d_{2}H(0,0)}{d_{1}d_{2}-\alpha_{1}\widehat{\alpha}_{2}\alpha_{2}\widehat{\alpha}_{1}},\,x\in G_{\mathcal{M}},
\end{array}
\label{sool1}
\end{equation}
where $D$ is a constant and $J(t)=\frac{\overline{U_{1}(t)}}{U_{1}(t)}$, $U_{1}(t)=U(\gamma_{0}(t))$, $|t|=1$. Setting $x=0$ in (\ref{sool1}) we derive a relation between $D$ and $H(0,0)$. Then, for $x=1\in G_{\mathcal{M}}$, and using the first in (\ref{rd}) we can obtain $D$ and $H(0,0)$. Substituting back in (\ref{sool1}) will give
\begin{equation}
\begin{array}{rl}
H(x,0)=&\frac{\lambda_{1}d_{2}-\alpha_{1}\widehat{\alpha}_{2}(\lambda_{2}-\alpha_{2}^{*}\tilde{P}_{2/\{2\}})}{(\alpha_{1}\widehat{\alpha}_{2}\alpha_{2}\widehat{\alpha}_{1}-d_{1}d_{2})(\bar{x}-1)^{r}}\left\lbrace(\bar{x}-x)^{r}\exp[\frac{\gamma(x)-\gamma(1)}{2i\pi}\int_{|t|=1}\frac{\log\{J(t)\}}{(t-\gamma(x))(t-\gamma(1))}dt]\right.\vspace{2mm}\\
&\left.+\frac{\alpha_{1}^{*}\tilde{P}_{1/\{1\}}d_{2}(\bar{x})^{r}}{\alpha_{1}\widehat{\alpha}_{2}\alpha_{2}^{*}\tilde{P}_{2/\{2\}}}\exp[\frac{-\gamma(1)}{2i\pi}\int_{|t|=1}\frac{\log\{J(t)\}}{t(t-\gamma(1))}dt]\right\rbrace,\,x\in G_{\mathcal{M}}.
\end{array}
\label{sool}
\end{equation}

Similarly, we can determine $H(0,y)$ by solving another Dirichlet boundary value problem on the closed contour $\mathcal{L}$. Then, using the fundamental functional equation (\ref{we}) we uniquely obtain $H(x,y)$.
\subsection{Expected Number of Packets and Average Delay}

In the following we derive formulas for the expected number of packets and the average delay at each user node in steady state, say $M_{i}$ and $D_{i}$, $i=1,2,$ respectively. Denote by $H_{1}(x,y)$, $H_{2}(x,y)$ the derivatives of $H(x,y)$ with respect to $x$ and $y$ respectively. Then, $M_{i}=H_{i}(1,1)$, and using Little's law $D_{i}=H_{i}(1,1)/\lambda_{i}$, $i=1,2$. Using the functional equation (\ref{we}) and (\ref{r1}), (\ref{r2}) we derive
\begin{equation}
\begin{array}{lccr}
M_{1}=\frac{\lambda_{1}+d_{1}H_{1}(1,0)}{\alpha_{1}\widehat{\alpha}_{2}},&&&M_{2}=\frac{\lambda_{2}+d_{2}H_{2}(0,1)}{\alpha_{2}\widehat{\alpha}_{1}}.
\end{array}
\label{perf}
\end{equation} 
We only focus on $M_{1}$, $D_{1}$ (similarly we can obtain $M_{2}$, $D_{2}$). Note that $H_{1}(1,0)$ can be obtained using (\ref{sool}) or (\ref{soll}) depending on the value of $\frac{\alpha_{1}\widehat{\alpha}_{2}}{\alpha_{1}^{*}\tilde{P}_{1/\{1\}}}+\frac{\alpha_{2}\widehat{\alpha}_{1}}{\alpha_{2}^{*}\tilde{P}_{2/\{2\}}}$. For the case $\frac{\alpha_{1}\widehat{\alpha}_{2}}{\alpha_{1}^{*}\tilde{P}_{1/\{1\}}}+\frac{\alpha_{2}\widehat{\alpha}_{1}}{\alpha_{2}^{*}\tilde{P}_{2/\{2\}}}\neq1$, using (\ref{sool}),
\begin{equation}
\begin{array}{rl}
H_{1}(1,0)=&\frac{\lambda_{1}d_{2}+\alpha_{1}\widehat{\alpha}_{2}(\alpha_{2}^{*}\tilde{P}_{2/\{2\}}-\lambda_{2})}{\alpha_{1}\widehat{\alpha}_{2}\alpha_{2}\widehat{\alpha}_{1}-d_{1}d_{2}}\{
\frac{\gamma^{\prime}(1)}{2\pi i}\int_{|t|=1}\frac{\log\{J(t)\}}{(t-\gamma(1))^{2}}dt+\frac{r}{1-\bar{x}}1_{\{r=1\}}\}.
\end{array}
\label{xz}
\end{equation}
Substituting (\ref{xz}) in (\ref{perf}) we obtain $M_{1}$, and dividing with $\lambda_{1}$, the average delay $D_{1}$. Note that the calculation of (\ref{zx}) requires the evaluation of integrals (\ref{zx}), and $\gamma(1)$, $\gamma^{\prime}(1)$. For an efficient numerical procedure see \cite{coh}, Chapter IV.1.

\section{Stability conditions: Extension to the case of $N=3$ users}\label{sec:stabb}
In the following, we provide sufficient and necessary conditions for the the case of $N=3$ users based on \cite{szpa1}. In particular we generalize the results in \cite{szpa1}, by including the effect of capture channel as well as the queue-aware transmission policy. We accomplish this by means of a technique based on three simple observations: isolating a single queue from the system, applying Loynes' stability criteria for such an isolated queue, and using stochastic dominance and mathematical induction to verify the required stationarity assumptions in the Loynes' criterion.  Below, we present
an informal overview of the approach. First of all, we construct a modified system as follows. Let $P = (S,U)$ be a partition of $M=\{1,2,3\}$ such that users in $S\neq M$ operate exactly as in the original model, while users in $U$ are able to send packets even if their buffers are empty (e.g., dummy packets). Note that a system consisting of users
in $S$ forms a smaller copy of the original system with slightly new probabilities of transmissions.
Furthermore, it is easy to see that the modified system, stochastically dominates the queue lengths process in the original system (see \cite{szpa,szpa1}). Therefore, proving stability of such
a dominant system - that is, the one under the partition $(S,U)$ - suffices for stability of
the original system. To accomplish this, we prove stability conditions for users in $S$ by mathematical induction. Finally, the stability region for the original system is a union of stability regions obtained for every partition $P$; see Theorem $1$ in \cite{szpa1}. However, as proved in \cite{szpa1}, only partitions $P_{k}=(M_{k},\{k\})$, where $M_{k}=M-\{k\}$ contribute to the final stability region.
\subsection{An alternative approach for the case of $N=2$ users}
We will present an alternative approach for the derivation of the stability region for the case of $N=2$ users in order to assist the analysis for the case of $N=3$ users. For such a case we consider the partitions $P_{1}=(M_{1},\{1\})$, $P_{2}=(M_{2},\{2\})$, where $M_{1}=\{2\}$ and $M_{2}=\{1\}$, and let $\mathcal{R}_{i}$ be stability region for the partition $P_{i}$, $i=1,2$. We will discuss the construction of $\mathcal{R}_{1}$ in detail, then similarly the rest can be obtained. Denote by $P_{suc}^{(i)}(M_{j})$ the probability of a successful transmission from user $i$ in the dominant system $M_{j}$. Clearly,
\begin{displaymath}
\begin{array}{rl}
P_{suc}^{(1)}(M_{1})=&\alpha_{1}^{*}\tilde{P}_{1/\{1\}}Pr(N_{2}=0)+\alpha_{1}(\bar{\alpha}_{2}P_{1/\{1\}}+\alpha_{2}P_{1/\{1,2\}})Pr(N_{2}>0)
=\alpha_{1}^{*}\tilde{P}_{1/\{1\}}+d_{1}P(N_{2}>0),\\
P_{suc}^{(2)}(M_{1})=&\alpha_{2}(\bar{\alpha}_{1}P_{2/{2}}+\alpha_{1}P_{2/\{1,2\}})=\alpha_{2}\widehat{\alpha}_{1}.
\end{array}
\end{displaymath}
However, for $\lambda_{2}<P_{suc}^{(2)}(M_{1})$, $Pr(N_{2}>0)=\lambda_{2}/(\alpha_{2} \widehat{\alpha}_{1})$, and hence $\mathcal{R}_{1}$ is obtained. Similarly, by considering $M_{2}$ we can obtain the stability region $\mathcal{R}_{2}$. Thus, stability conditions are the same with ones obtained in Theorem \ref{lem}.

\subsection{The case of $N=3$ users}
Here we consider the case of $N=3$ users, which is more intricate. We now have to investigate only three partitions $P_{i}=(M_{i},\{i\})$, where $M_{1}=\{2,3\}$, $M_{2}=\{1,3\}$ and $M_{3}=\{1,2\}$, and only the first partition will be discussed in detail. As stated previously, the stability region $\mathcal{R}$ is the union of three regions $\mathcal{R}_{1}$, $\mathcal{R}_2$ and $\mathcal{R}_3$ each corresponding to $M_1$, $M_2$, and $M_3$, respectively. 

To proceed, we have to make clear how the system operates: for convenience we assume that node $i$, $i=1,2,3$, transmits a packet to the common destination with probability $\alpha_{i}$ when the node $(i\text{ }mod\ 3 +1)$ is non empty, and with probability $\alpha_{i}^{*} \geq \alpha_{i}$ when the node $(i\text{ }mod\ 3+1)$ is empty. We  will present the derivation of $\mathcal{R}_{1}$. In the corresponding dominant system, the first user never empties. Note that such a system can be viewed as a two-user system with an additional user who creates interference (i.e., it transmits dummy packets when is is empty) to the other users. In order to proceed we are going to perform a similar analysis as in Section \ref{sec:bound}.

Let $F_{1}(y,z)$ be the generating function of $(N_{2,n},N_{3,n})$  with the first user being an interfering one (i.e., it never empties). Then, with a minor modification, 
\begin{equation}
\begin{array}{rl}
\lambda_{2}=&\alpha_{2}\widehat{\alpha}_{13}^{(2)}(1-F_{1}(0,1))-d_{2}^{*}(F_{1}(1,0)-F_{1}(0,0)),\\
\lambda_{3}=&\alpha_{3}\widehat{\alpha}_{12}^{(3)}(1-F_{1}(1,0))-d_{3}^{*}(F_{1}(0,1)-F_{1}(0,0)),
\end{array}
\label{as}
\end{equation}
where, for $k\neq i\neq j$, $k,i,j\in M=\{1,2,3\}$,
\begin{displaymath}
\begin{array}{rl}
\widehat{\alpha}_{ij}^{(k)}=&\bar{\alpha}_{i}\bar{\alpha}_{j}P_{k/\{k\}}+\alpha_{i}\bar{\alpha}_{j}P_{k/\{k,i\}}+\alpha_{j}\bar{\alpha}_{i}P_{k/\{k,j\}}+\alpha_{i}\alpha_{j}P_{k/\{k,i,j\}},\\
d_{2}^{*}=&\alpha_{2}\widehat{\alpha}_{13}^{(2)}-\alpha_{2}^{*}\bar{\alpha}_{12},\,
d_{3}^{*}=\alpha_{3}(\widehat{\alpha}_{12}^{(3)}-\bar{\alpha}_{13}^{*}),\\
\bar{\alpha}_{ij}=&\bar{\alpha}_{i}\tilde{P}_{j/\{j\}}+\alpha_{i}\tilde{P}_{j/\{i,j\}},\,i,j\in M,\,i\neq j\\
\bar{\alpha}_{ij}^{*}=&\bar{\alpha}_{i}^{*}\tilde{P}_{j/\{j\}}+\alpha_{i}^{*}\tilde{P}_{j/\{i,j\}},\,i,j\in M,\,i\neq j,
\end{array}
\end{displaymath}
and $P_{i/T}$ is the success probability of user $i$ when the transmitting users are in $T$ and all users have packets to send, $\tilde{P}_{i/\{T\}}$ is the success probability of user $i$ when the transmitting users are in $T$ and there is only one user ($\neq i,1$) that is empty.
Following \cite{szpa1}, for $z_{2},z_{3}\in\{0,1\}$, let $P_{1}(z_{2},z_{3})=P(\chi(\bar{N_{2}})=z_{2},\chi(\bar{N_{3}})=z_{3})$, with the first user be the interfering one, $(\bar{N}_{2},\bar{N}_{3})$ the queue lengths in the modified system and $\chi(k)=0$ for $k=0$ and $\chi(k)=1$ for $k\geq 1$. Then, from the analysis in section \ref{sec:bound}, we have
\begin{equation}
\begin{array}{c}
P_{1}(0,0)=\frac{\alpha_{2}\widehat{\alpha}_{13}^{(2)}(\alpha_{3}\widehat{\alpha}_{12}^{(3)}-\lambda_{3})+d_{3}^{*}(\lambda_{2}-\alpha_{2}\widehat{\alpha}_{13}^{(2)})}{\alpha_{2}\widehat{\alpha}_{13}^{(2)}\alpha_{3}\bar{\alpha}_{13}^{*}}\exp \left[\frac{-\gamma(1)}{2\pi i}\int_{|t|=1}\frac{\log\{J(t)\}}{t(t-\gamma(1))}dt \right],
\end{array}
\label{es}
\end{equation}
where $\gamma(x)$ is the inverse of a conformal mapping of the unit circle onto a curve $\mathcal{M^{*}}$ (see subsection \ref{ker}). Note also that $P_{1}(1,0)=F_{1}(1,0)-F_{1}(0,0)$, $P_{1}(0,1)=F_{1}(0,1)-F_{1}(0,0)$, $P_{1}(1,1)=1-F_{1}(1,0)-F_{1}(0,1)+F_{1}(0,0)$, $P_{1}(0,0)=F_{1}(0,0)$. Now, for $P_{1}=(M_{1},\{1\})$, and after some algebra,
\begin{equation}
\begin{array}{rl}
P_{suc}^{(1)}(M_{1})=&\alpha_{1}^{*}\tilde{P}_{1/\{1\}}^{\prime}P_{1}(0,0)+\alpha_{1}\bar{\alpha}_{21}^{*}P_{1}(1,0)+\alpha_{1}^{*}\bar{\alpha}_{31}P_{1}(0,1)+\alpha_{1}\widehat{\alpha}_{23}^{(1)}P_{1}(1,1),\\
P_{suc}^{(2)}(M_{1})=&\alpha_{2}\widehat{\alpha}_{13}^{(2)}+F_{1}(1,0)(\alpha_{2}^{*}\bar{\alpha}_{12}-\alpha_{2}\widehat{\alpha}_{13}^{(2)}),\\
P_{suc}^{(3)}(M_{1})=&\alpha_{3}\widehat{\alpha}_{12}^{(3)}+F_{1}(0,1)\alpha_{3}(\bar{\alpha}_{13}^{*}-\widehat{\alpha}_{12}^{(3)}),
\end{array}
\label{wsx}
\end{equation}
where $\tilde{P}_{i/\{i\}}^{\prime}$ is the success probability for a user $i$, when it is the only non-empty node.

Note that from (\ref{as}), provided that $d_{2}^{*}d_{3}^{*}-\alpha_{2}\widehat{\alpha}_{13}^{(2)}\alpha_{3}\widehat{\alpha}_{12}^{(3)}\neq 0$,
\begin{equation}
\begin{array}{rl}
F_{1}(1,0)=&\frac{\alpha_{2}\widehat{\alpha}_{13}^{(2)}(\alpha_{3}\widehat{\alpha}_{12}^{(3)}-\lambda_{3})+d_{3}^{*}(\lambda_{2}-\alpha_{2}\widehat{\alpha}_{13}^{(2)})+d_{3}^{*}\alpha_{2}^{*}\bar{\alpha}_{12}F_{1}(0,0)}{\alpha_{2}\widehat{\alpha}_{13}^{(2)}\alpha_{3}\widehat{\alpha}_{12}^{(3)}-d_{2}^{*}d_{3}^{*}},\\
F_{1}(0,1)=&\frac{\alpha_{3}\widehat{\alpha}_{12}^{(3)}(\alpha_{2}\widehat{\alpha}_{13}^{(2)}-\lambda_{2})+d_{2}^{*}(\lambda_{3}-\alpha_{3}\widehat{\alpha}_{12}^{(3)})+d_{2}^{*}\alpha_{3}\bar{\alpha}_{13}^{*}F_{1}(0,0)}{\alpha_{2}\widehat{\alpha}_{13}^{(2)}\alpha_{3}\widehat{\alpha}_{12}^{(3)}-d_{2}^{*}d_{3}^{*}}.
\end{array}
\label{cx}
\end{equation}
Therefore, after some simple but tedious calculations we have
\begin{equation}
\begin{array}{rl}
P_{suc}^{(1)}(M_{1})=&P_{1}(0,0)\{\alpha_{1}^{*}\tilde{P}_{1/\{1\}}^{\prime}-\alpha_{1}\bar{\alpha}_{21}^{*}-\alpha_{1}^{*}\bar{\alpha}_{31}-\alpha_{1}\widehat{\alpha}_{23}^{(1)}+\frac{(\alpha_{1}\bar{\alpha}_{21}^{*}-\alpha_{1}\widehat{\alpha}_{23}^{(1)})d_{3}^{*}\alpha_{2}^{*}\bar{\alpha}_{12}+(\alpha_{1}^{*}\bar{\alpha}_{31}-\alpha_{1}\widehat{\alpha}_{23}^{(1)})d_{2}^{*}\alpha_{3}\bar{\alpha}_{13}^{*}}{\alpha_{2}\widehat{\alpha}_{13}^{(2)}\alpha_{3}\widehat{\alpha}_{12}^{(3)}-d_{2}^{*}d_{3}^{*}}\}\\
&+\frac{(\alpha_{3}\widehat{\alpha}_{12}^{(3)}-\lambda_{3})[\alpha_{2}\widehat{\alpha}_{13}^{(2)}(\alpha_{1}\bar{\alpha}_{21}^{*}-\alpha_{1}\widehat{\alpha}_{23}^{(1)})-d_{2}^{*}(\alpha_{1}^{*}\bar{\alpha}_{31}-\alpha_{1}\widehat{\alpha}_{23}^{(1)})]}{\alpha_{2}\widehat{\alpha}_{13}^{(2)}\alpha_{3}\widehat{\alpha}_{12}^{(3)}-d_{2}^{*}d_{3}^{*}}
+\frac{(\alpha_{2}\widehat{\alpha}_{13}^{(2)}-\lambda_{2})[\alpha_{3}\widehat{\alpha}_{12}^{(3)}(\alpha_{1}^{*}\bar{\alpha}_{31}-\alpha_{1}\widehat{\alpha}_{23}^{(1)})-d_{3}^{*}(\alpha_{1}\bar{\alpha}_{21}^{*}-\alpha_{1}\widehat{\alpha}_{23}^{(1)})]}{\alpha_{2}\widehat{\alpha}_{13}^{(2)}\alpha_{3}\widehat{\alpha}_{12}^{(3)}-d_{2}^{*}d_{3}^{*}}.
\end{array}
\label{es1}
\end{equation}
Similarly, using (\ref{cx}) we can express $P_{suc}^{(2)}(M_{1})$, $P_{suc}^{(3)}(M_{1})$ in terms of $F_{1}(0,0)$.

In summary, following \cite{szpa1} we have the following corollary.
\begin{corollary}
The system with $N=3$ users is stable if and only if $(\lambda_{1},\lambda_{2},\lambda_{3})\in \mathcal{R}=\mathcal{R}_1 \cup \mathcal{R}_2 \cup \mathcal{R}_3$, where
\begin{displaymath}
\scriptsize
\begin{array}{rl}
\mathcal{R}_{1}=\left\{(\lambda_{1},\lambda_{2}) :\lambda_{1}<P_{suc}^{(1)}(M_{1}),\,\lambda_{2}<\alpha_{2}\widehat{\alpha}_{13}^{(2)}+F_{1}(1,0)(\alpha_{2}^{*}\bar{\alpha}_{12}-\alpha_{2}\widehat{\alpha}_{13}^{(2)}),\lambda_{3}<\alpha_{3}\widehat{\alpha}_{12}^{(3)}+F_{1}(0,1)\alpha_{3}(\bar{\alpha}_{13}^{*}-\widehat{\alpha}_{12}^{(3)})\right\},\\
\mathcal{R}_{2}=\left\{(\lambda_{1},\lambda_{2}) :\lambda_{1}<\alpha_{1}\widehat{\alpha}_{23}^{(1)}+F_{2}(1,0)\alpha_{1}(\bar{\alpha}_{21}^{*}-\widehat{\alpha}_{23}^{(1)}),\,\lambda_{2}<P_{suc}^{(2)}(M_{2}),\,\lambda_{3}<\alpha_{3}\widehat{\alpha}_{12}^{(3)}+F_{2}(0,1)(\alpha_{3}^{*}\bar{\alpha}_{23}-\alpha_{3}\widehat{\alpha}_{12}^{(3)})\right\},\\

\mathcal{R}_{3}=\left\{(\lambda_{1},\lambda_{2}) :\lambda_{1}<\alpha_{1}\widehat{\alpha}_{23}^{(1)}+F_{3}(1,0)(\alpha_{1}^{*}\bar{\alpha}_{31}-\alpha_{1}\widehat{\alpha}_{23}^{(1)}),\,\lambda_{2}<\alpha_{2}\widehat{\alpha}_{13}^{(2)}+F_{3}(0,1)\alpha_{2}(\bar{\alpha}_{32}^{*}-\widehat{\alpha}_{13}^{(2)}),\,\lambda_{3}<P_{suc}^{(3)}(M_{3})\right\},
\end{array}
\end{displaymath}
where the appropriate probabilities are computed from the results obtained in Section \ref{sec:bound} as discussed in (\ref{as})-(\ref{es1}). 
\end{corollary}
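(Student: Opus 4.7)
The plan is to prove the corollary via the stochastic dominance construction outlined at the beginning of Section \ref{sec:stabb}, carrying it out one partition at a time, and then to upgrade the resulting sufficient conditions to necessary ones through the indistinguishability argument of \cite{Rao_TIT1988,szpa1}. I would treat $\mathcal{R}_1$, corresponding to $P_1=(M_1,\{1\})$, in full detail; the regions $\mathcal{R}_2$ and $\mathcal{R}_3$ then follow by cyclic relabeling of the user indices.

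First I would construct the $M_1$-dominant system, in which user $1$ transmits a dummy packet in every slot when its buffer is empty while users $2$ and $3$ behave exactly as in the original model. A coupling argument with a common arrival process and identical initial conditions shows that the queue lengths in the dominant system are pathwise no smaller than those in the original, so stability of the dominant system is sufficient for stability of the original. In this dominant system user $1$ acts as a permanent interferer for the pair $(N_{2,n},N_{3,n})$, which is governed by a two-dimensional Markov chain of exactly the type handled in Sections \ref{sec:analysis} and \ref{sec:bound}, with coefficients modified to reflect that user $1$ is always active.

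Next I would obtain the boundary probabilities by carrying over the boundary value problem machinery verbatim. Differentiating the associated functional equation at $(1,1)$ in the spirit of \eqref{r1} and \eqref{r2} produces the conservation-of-flow identities \eqref{as}. The quantity $F_1(0,0)$ is then computed by solving the analogous Riemann-Hilbert problem on an egg-shaped contour $\mathcal{M}^{*}$, yielding \eqref{es}; substituting back into \eqref{as} gives $F_1(1,0)$ and $F_1(0,1)$ as affine functions of $F_1(0,0)$, namely \eqref{cx}. A straightforward conditioning on the four emptiness patterns of the queues of users $2$ and $3$ produces the service rate expressions \eqref{wsx} and \eqref{es1}. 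Loynes' theorem applied separately to each of the three queues in the dominant system now yields the three strict inequalities that define $\mathcal{R}_1$, establishing sufficiency for the partition $P_1$. Repeating the construction for the two remaining partitions gives sufficiency for $\mathcal{R}_2$ and $\mathcal{R}_3$, and hence for $\mathcal{R}_1\cup\mathcal{R}_2\cup\mathcal{R}_3$.

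Finally, to turn sufficiency into necessity I would invoke the indistinguishability argument exactly as in the proof of Theorem \ref{lem}: on the stability boundary of each $\mathcal{R}_k$ the queue of user $k$ empties with probability zero, so the dummy packets introduced in the dominant system are never actually injected and the two systems agree in distribution. Consequently no rate vector outside $\mathcal{R}_1\cup\mathcal{R}_2\cup\mathcal{R}_3$ can be stable, and the characterization is exact. The main obstacle I anticipate is verifying that the boundary value problem producing $F_1(0,0)$ in \eqref{es} retains index zero under the relevant stability inequalities, so that the conformal-mapping solution of Section \ref{sec:bound} transfers without modification; this requires checking that the kernel of the modified functional equation still has a unique branch $X_0$ with $|X_0(y)|\le 1$ on $|y|=1$, and that $\mathcal{M}^{*}$ is again a smooth, symmetric, egg-shaped contour so that Lemma \ref{sq} and the Theodorsen-equation representation \eqref{zx} remain applicable with merely the replacement of the two-user parameters by the augmented three-user ones $\widehat{\alpha}_{ij}^{(k)}$, $d_i^{*}$, $\bar{\alpha}_{ij}$, $\bar{\alpha}_{ij}^{*}$.
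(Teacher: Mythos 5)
Your proposal is correct and follows essentially the same route as the paper: the partition-based stochastic dominance construction of \cite{szpa1}, computation of the boundary probabilities $F_k(0,0)$, $F_k(1,0)$, $F_k(0,1)$ via the Riemann--Hilbert machinery of Section \ref{sec:bound}, Loynes' criterion applied per queue in each dominant system, and indistinguishability at the boundary for necessity. The paper is in fact less explicit than you are about the technical transfer of the kernel/contour/index analysis to the modified three-user functional equation, so your flagged obstacle is a fair (and unaddressed in the paper) point rather than a gap in your argument.
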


\section{Explicit expressions for the symmetrical model for the Two-user case}
\label{sec:symmetric2user}
In this section we consider the symmetrical model and obtain closed form expressions for the average delay for the collision model and the capture model without explicitly computing the generating function for the stationary joint queue length distribution. Moreover, we provide upper and lower delay bounds for the MPR channel model.

By symmetrical, we mean the case where $\alpha_{i}^{*}=\alpha^{*}$, $\alpha_{i}=\alpha$, $\lambda_{i}=\lambda$, $P_{i/\{i\}}=p$, $\tilde{P}_{i/\{i\}}=\tilde{p}$, $P_{i/\{1,2\}}=b$, $P_{1,2/\{1,2\}}=c$. Due to the symmetry of the model we have $H_{1}(1,1)=H_{2}(1,1)$, $H_{1}(1,0)=H_{2}(0,1)$. Note that $M_{k}=H_{k}(1,1)$ the expected number of packets in node $j$. Therefore, after simple calculations using (\ref{we}) we obtain,
\begin{equation}
M_{1}=\frac{\lambda+(d+\alpha^{2}c)H_{1}(1,0)}{\alpha(p+\alpha(b+c-p))-\lambda}.
\label{t1}
\end{equation}
Setting $x=y$ in (\ref{we}), differentiating it with respect to $x$ at $x=1$, and using (\ref{r1}) we obtain
\begin{equation}
\begin{array}{c}
M_{1}+M_{2}=2M_{1}=\frac{2\lambda-\lambda^{2}+\alpha^{2}cP(N_{1}>0,N_{2}>0)+2H_{1}(1,0)(\alpha(p+\alpha(b-p)+d+2\alpha^{2}c))}{2(\alpha(p+\alpha(b+c-p))-\lambda)}.
\end{array}
\label{t2}
\end{equation}
Using (\ref{t1}), (\ref{t2}) we finally obtain
\begin{equation}
\begin{array}{c}
M_{1}=M_{2}=\frac{\lambda[2(\alpha+\alpha^{2}(b+c-p))+\lambda(d+\alpha^{2}c)]}{2\alpha^{*}\tilde{p}(\alpha(p+\alpha(b+c-p))-\lambda)}-\frac{\alpha^{2}c(d+\alpha^{2}c)P(N_{1}>0,N_{2}>0)}{2\alpha^{*}\tilde{p}(\alpha(p+\alpha(b+c-p))-\lambda)}.
\end{array}
\label{rt}
\end{equation}
Therefore, using Little's law the average delay in a node is given by
\begin{equation}
\begin{array}{c}
D_{1}=D_{2}=\frac{2(\alpha+\alpha^{2}(b+c-p))+\lambda(d+\alpha^{2}c)}{2\alpha^{*}\tilde{p}(\alpha(p+\alpha(b+c-p))-\lambda)}+\phi,
\end{array}
\label{rt1}
\end{equation}
where $\phi=-\frac{\alpha^{2}c(d+\alpha^{2}c)P(N_{1}>0,N_{2}>0)}{2\lambda\alpha^{*}\tilde{p}(\alpha(p+\alpha(b+c-p))-\lambda)}$;
note that $\alpha(p+\alpha(b+c-p))>\lambda$ due to the stability condition. 

In case of the capture model, i.e., $c=0$, the exact average queueing delay in a node is given by (\ref{rt1}) for $\phi=0$. In case $c\neq0$, i.e., strong MPR effect, we are going to obtain upper and lower bounds for the expected delay based on the sign of $\phi$. Since $P(N_{1}>0,N_{2}>0)>0$, the sign of $\phi$ coincides with the sign of $d+\alpha^{2}c$. Thus, in order to proceed, we distinguish the analysis in the following two cases:
\begin{enumerate}
\item If $d+\alpha^{2}c<0$, then $0\leq\phi\leq -\frac{\alpha^{2}c(d+\alpha^{2}c)}{2\lambda\alpha^{*}\tilde{p}(\alpha(p+\alpha(b+c-p))-\lambda)}$.
Thus, the upper and lower delay bound, say $D_{1}^{up}$, $D_{1}^{low}$ respectively are,
\begin{displaymath}
\begin{array}{rl}
D_{1}^{up}=&\frac{2(\alpha+\alpha^{2}(b+c-p))+\lambda(d+\alpha^{2}c)}{2\alpha^{*}\tilde{p}(\alpha(p+\alpha(b+c-p))-\lambda)}-\frac{\alpha^{2}c(d+\alpha^{2}c)}{2\lambda\alpha^{*}\tilde{p}(\alpha(p+\alpha(b+c-p))-\lambda)},\\
D_{1}^{low}=&\frac{2(\alpha+\alpha^{2}(b+c-p))+\lambda(d+\alpha^{2}c)}{2\alpha^{*}\tilde{p}(\alpha(p+\alpha(b+c-p))-\lambda)}.
\end{array}
\end{displaymath} 
\item If $d+\alpha^{2}c>0$, then $-\frac{\alpha^{2}c(d+\alpha^{2}c)}{2\widehat{\lambda}\alpha^{*}\tilde{p}(\alpha(p+\alpha(b+c-p))-\widehat{\lambda})}\leq\phi\leq 0$. 
In such a case,
\begin{displaymath}
\begin{array}{rl}
D_{1}^{up}=&\frac{2(\alpha+\alpha^{2}(b+c-p))+\lambda(d+\alpha^{2}c)}{2\alpha^{*}\tilde{p}(\alpha(p+\alpha(b+c-p))-\lambda)},\\
D_{1}^{low}=&\frac{2(\alpha+\alpha^{2}(b+c-p))+\lambda(d+\alpha^{2}c)}{2\alpha^{*}\tilde{p}(\alpha(p+\alpha(b+c-p))-\lambda)}-\frac{\alpha^{2}c(d+\alpha^{2}c)}{2\lambda\alpha^{*}\tilde{p}(\alpha(p+\alpha(b+c-p))-\lambda)}.
\end{array}
\end{displaymath} 
\end{enumerate}

\begin{remark}
Note that $d+\alpha^{2}c=\alpha p+\alpha^{2}(b+c-p)-\alpha^{*}\tilde{p}$, is the difference of the successful transmission probability of a node when both nodes are active (i.e., $\alpha p+\alpha^{2}(b+c-p)$) minus the successful transmission probability of a node when the other node is inactive (i.e., $\alpha^{*}\tilde{p}$). Clearly, it is realistic to assume that $d+\alpha^{2}c<0$, since it is more likely for a node to successfully transmit a packet when it is the only active. 
\end{remark}

\begin{lemma}
Let $\tilde{\alpha}$ be the optimal transmission probability for the minimizing the expected delay in the symmetric capture model with $\alpha\leq\alpha^{*}\leq1$. Then,
\begin{displaymath}
\begin{array}{c}
\tilde{\alpha}=\left\{\begin{array}{ll}
\alpha^{*},&\,if\,b\geq\frac{p(2\alpha^{*}-1)}{2\alpha^{*}},\vspace{2mm}\\
\frac{p}{2(p-b)},&if\,0\leq b<\frac{p(2\alpha^{*}-1)}{2\alpha^{*}}.
\end{array}\right.
\end{array}
\end{displaymath}
\end{lemma}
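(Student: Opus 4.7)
My plan is to reduce the delay minimization problem to the maximization of the per-slot joint service rate $\mu(\alpha):=\alpha p+\alpha^{2}(b-p)=\alpha((1-\alpha)p+\alpha b)$ over $\alpha\in(0,\alpha^{*}]$, after which the answer drops out from elementary calculus on a concave quadratic.

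The starting point is the exact expression for $D_{1}$ given by (\ref{rt1}) specialized to the capture case $c=0$ (so $\phi=0$). The key algebraic observation is the identity $d=\mu(\alpha)-\alpha^{*}\tilde{p}$, which is immediate from the symmetric definition $d=\alpha((1-\alpha)p+\alpha b)-\alpha^{*}\tilde{p}$. Substituting this into (\ref{rt1}) eliminates $d$ and, after grouping, rewrites $D_{1}$ in the form
\begin{equation*}
D_{1}(\alpha)=\frac{2+\lambda}{2\alpha^{*}\tilde{p}}+\frac{\lambda\bigl((2+\lambda)-\alpha^{*}\tilde{p}\bigr)}{2\alpha^{*}\tilde{p}\bigl(\mu(\alpha)-\lambda\bigr)}.
\end{equation*}
The first summand is $\alpha$-independent, and the coefficient of the reciprocal is strictly positive since $\alpha^{*}\tilde{p}\le 1<2+\lambda$. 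Hence $D_{1}$ is a strictly decreasing function of $\mu(\alpha)-\lambda$, and minimizing $D_{1}$ on $(0,\alpha^{*}]$ is equivalent to maximizing $\mu(\alpha)$ there.

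The second step is the concave maximization itself. Under the realistic assumption $b<p$ recorded in the remark preceding the lemma, $\mu(\alpha)=\alpha p-\alpha^{2}(p-b)$ is a strictly concave quadratic, and setting $\mu'(\alpha)=p-2\alpha(p-b)=0$ yields the unique unconstrained maximizer $\alpha_{0}:=p/[2(p-b)]$. The interior optimum is feasible iff $\alpha_{0}\le\alpha^{*}$, and clearing denominators recasts this as the threshold $b\le p(2\alpha^{*}-1)/(2\alpha^{*})$; in that regime $\tilde{\alpha}=\alpha_{0}$. Otherwise $\mu$ is strictly increasing on $[0,\alpha^{*}]$, so $\tilde{\alpha}=\alpha^{*}$. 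The degenerate cases $b=p$ and $b>p$ automatically fall under the latter branch of the dichotomy, since $p(2\alpha^{*}-1)/(2\alpha^{*})\le p\le b$ for $\alpha^{*}\le 1$. This reproduces the case split announced in the lemma.

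The only delicate point is the algebraic reduction of $D_{1}$ into the constant-plus-reciprocal-in-$\mu(\alpha)$ form: one has to verify that, after the substitution $d=\mu-\alpha^{*}\tilde{p}$, the numerator of (\ref{rt1}) really does collapse to an affine function of $\mu(\alpha)$ alone, with no residual explicit dependence on $\alpha$. Once that simplification is secured, the strict monotonicity of $D_{1}$ in $\mu(\alpha)-\lambda$ is automatic, and the rest is just projecting the vertex of a concave parabola onto the feasible interval $[0,\alpha^{*}]$ and reading off the threshold.
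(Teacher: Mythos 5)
Your proof is correct and reaches the same optimizer by the same underlying fact (the vertex of the concave quadratic $\alpha p+\alpha^{2}(b-p)$ projected onto $[0,\alpha^{*}]$), but the route to it differs from the paper's in one substantive way. The paper treats (\ref{dc0}) as a constrained minimization of the full rational function of $\alpha$, relaxes the box constraint, differentiates the ratio directly, and reports that the only interior critical point is $\alpha^{\prime}=p/[2(p-b)]$; you instead first establish that $D_{1}$ is a strictly decreasing function of the aggregate service rate $\mu(\alpha)=\alpha(p+\alpha(b-p))$, via the decomposition $D_{1}=\tfrac{2+\lambda}{2\alpha^{*}\tilde{p}}+\tfrac{\lambda(2+\lambda-\alpha^{*}\tilde{p})}{2\alpha^{*}\tilde{p}(\mu(\alpha)-\lambda)}$, which reduces the problem to throughput maximization and makes the appearance of $\mu^{\prime}(\alpha)=0$ transparent. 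The two arguments hinge on the same hidden fact — that after substituting $d=\mu(\alpha)-\alpha^{*}\tilde{p}$ the numerator of the delay is affine in $\mu(\alpha)$ — and you rightly flag this as the delicate step. Be aware that the expressions (\ref{rt1}) and (\ref{dc0}) as printed contain an apparent typo ($\alpha+\alpha^{2}(b-p)$ where the derivation from (\ref{t1})–(\ref{t2}) yields $\alpha p+\alpha^{2}(b-p)$, and a sign on $\lambda\alpha^{*}\tilde{p}$); taken literally they leave a residual $2\alpha(1-p)$ term in the numerator, which would break your monotone reduction for $p<1$ and equally would make $p/[2(p-b)]$ fail to be a critical point of the paper's own objective. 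Since the corrected numerator $(2+\lambda)\mu(\alpha)-\lambda\alpha^{*}\tilde{p}$ does collapse as you claim, your argument goes through and in fact exposes why the paper's stated critical point is the right one. Your handling of feasibility (the maximizer of $\mu$ over $[0,\alpha^{*}]$ automatically satisfies $\mu(\alpha)>\lambda$ whenever the feasible set is nonempty) and of the degenerate cases $b\geq p$ is a small additional benefit over the paper's assumption $b<p$ with roots $s_{1}<\alpha<s_{2}$.
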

\begin{proof}
The problem can be cast as follows:
\begin{equation}
\begin{array}{c}
\tilde{\alpha}=\underset{\{\alpha(p+\alpha(b-p))-\lambda>0,\,\alpha\in[0,\alpha^{*}]\}}{argmin}
\left\{\frac{(2+\lambda)(\alpha+\alpha^{2}(b-p))+\lambda\alpha^{*}\tilde{p}}{2\alpha^{*}\tilde{p}(\alpha(p+\alpha(b-p))-\lambda)}\right\}.
\end{array}
\label{dc0}
\end{equation}
To proceed, we first focus on the looser constrained optimization problem,
\begin{equation}
\begin{array}{c}
\alpha^{\prime}=\underset{\{\alpha(p+\alpha(b-p))-\lambda>0\}}{argmin}
\left\{\frac{(2+\lambda)(\alpha+\alpha^{2}(b-p))+\lambda\alpha^{*}\tilde{p}}{2\alpha^{*}\tilde{p}(\alpha(p+\alpha(b-p))-\lambda)}\right\}.
\end{array}
\label{dc}
\end{equation} 
Clearly $b<p$, since it is more likely a transmission to be successful when only one node is transmitting rather than when both nodes transmit. Thus, $\alpha(p+\alpha(b-p))-\lambda>0$ is equivalent with $s_{1}<\alpha<s_{2}$, where $s_{1}$, $s_{2}$ the roots of $\alpha(p+\alpha(b-p))-\lambda=0$, where $0\leq s_{1}\leq1$. Differentiating the objective function in (\ref{dc}), we can easily derive that the only possible minimum will be given at $\alpha^{\prime}=\frac{p}{2(p-b)}$, where $s_{1}\leq \alpha^{\prime}\leq s_{2}$. If $\alpha^{\prime}<\alpha^{*}$, which is true for $b<\frac{p(2\alpha^{*}-1)}{2\alpha^{*}}$, then $\tilde{\alpha}=\alpha^{\prime}=\frac{p}{2(p-b)}$ is the minimum of the objective function (\ref{dc0}). If $\alpha^{\prime}\geq\alpha^{*}$, which is equivalent with $b\geq\frac{p(2\alpha^{*}-1)}{2\alpha^{*}}$, then the optimal transmission probability, which minimizes the objective function in (\ref{dc0}) is $\tilde{\alpha}=\alpha^{*}$.
\end{proof}

\section{Numerical results}\label{sec:num}
In this section, we provide numerical results to validate the analysis presented earlier. We consider the case where the users have the same link characteristics and transmission probabilities to facilitate exposition clarity, so we will use the notation from Section \ref{sec:symmetric2user}.

\subsection{Stable Throughput Region}
The stability or stable throughput region for given transmission probabilities is depicted in Fig. \ref{fig:region} in the general case. The proposed random access scheme for given transmission probabilities is superior in the cases of collision, capture and the MPR channel modes, as it can be easily seen by replacing the parameters and putting $\alpha_{1}^{*}=\alpha_{1}$ and $\alpha_{2}^{*}=\alpha_{2}$. 

As mentioned above, in Section \ref{sec:stability2}, we obtained the stability region with fixed transmission probability vectors $(\alpha_{1}, \alpha_{2},\alpha_{1}^{*}, \alpha_{2}^{*})$.
If we take the union of these regions over all possible transmission probabilities of the users, we obtain the total stability region (i.e. the envelope of the individual regions). This corresponds to the closure of the stability region and is defined as
\begin{equation} \label{eq:closure_def}
\mathcal{L}\triangleq \left( \bigcup_{ \boldsymbol{\vec{\alpha}} \in [0,1]^2 \times [\alpha_{1},1] \times [\alpha_{2},1]} \mathcal{L}_1 (\boldsymbol{\vec{\alpha}}) \right) \bigcup \left( \bigcup_{ \boldsymbol{\vec{\alpha}} \in [0,1]^2 \times [\alpha_{1},1] \times [\alpha_{2},1]} \mathcal{L}_2 (\boldsymbol{\vec{\alpha}}) \right)
\end{equation}
where $\mathcal{L}_i (\boldsymbol{\vec{\alpha}} ) \triangleq \mathcal{R}_i$ for $i=1,2$ are obtained in Section \ref{sec:stability2} and $\boldsymbol{\vec{\alpha}}=(\alpha_{1}, \alpha_{2},\alpha_{1}^{*}, \alpha_{2}^{*})$ is the vector of transmission probabilities.

Here, we will present the closure of the stable throughput region for the collision channel case where  $p=\tilde{p}=1$ and $b=c=0$.\footnote{The closure for the MPR channel model is omitted due to space limitations.} In Figs. \ref{fig:STR_collision} and \ref{fig:STR_collision2} the closure of the stability region for the traditional collision channel with random access and for the proposed scheme are depicted. Clearly, our scheme is superior to the traditional one. The region in Fig. \ref{fig:STR_collision2} is broader than the one in Fig. \ref{fig:STR_collision} which means that higher arrival rates can be supported and still maintain the system stable. Besides, the shape of the closure of the proposed scheme has linear behavior compare to the non-linear for the traditional one. This is a very interesting result. 

\begin{figure}[h!]
\centering
\subfigure[Collision channel with $\alpha=\alpha^{*}$.]{
\includegraphics[scale=0.45]{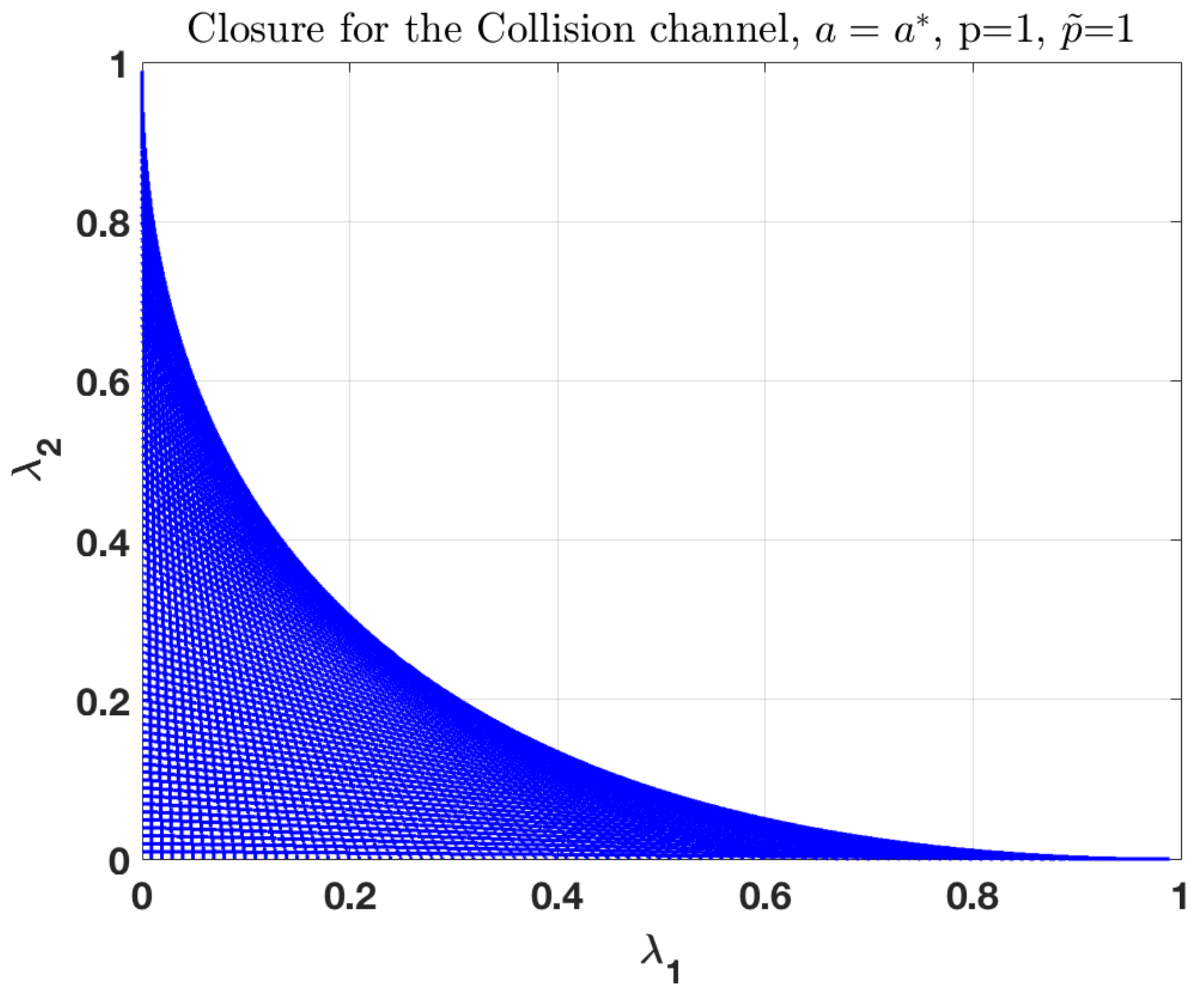}
\label{fig:STR_collision}
}
\subfigure[Collision channel with $\alpha \leq \alpha^{*}\leq1$.]{
\includegraphics[scale=0.45]{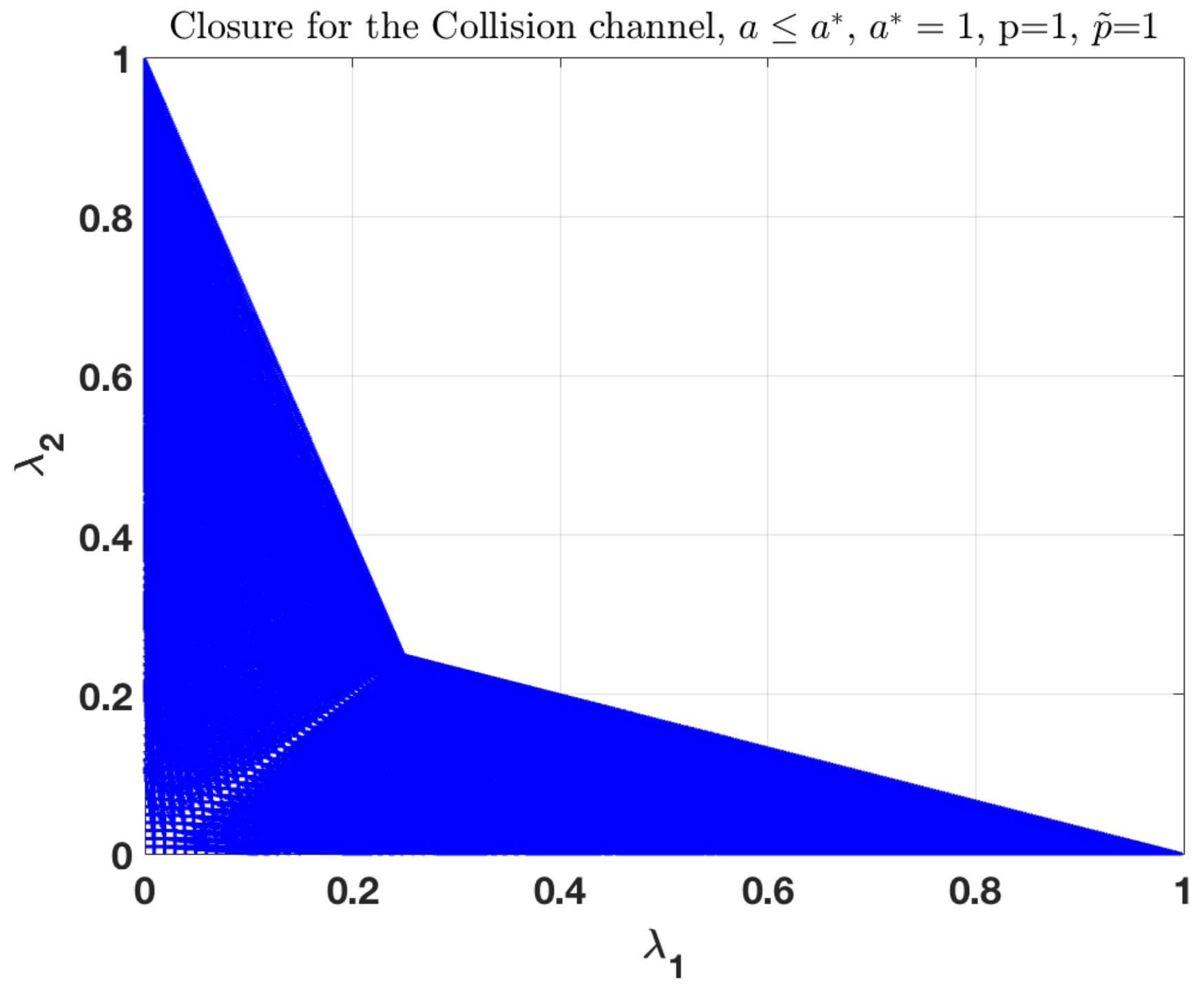}
\label{fig:STR_collision2}
}
\caption{Closure of the Stability Region for the collision channel ($b=c=0$) for $p=\tilde{p}=1$.}
\end{figure}

In Figs. \ref{fig:STR_capture} and \ref{fig:STR_capture2} the closure of the stability region for the capture channel with random access and for the proposed scheme are depicted for $b=0.2$. Our scheme is still superior to the traditional one since the region in Fig. \ref{fig:STR_capture} is a subset of the region in Fig. \ref{fig:STR_capture}.

\begin{figure}[h!]
\centering
\subfigure[Capture channel with $\alpha=\alpha^{*}$.]{
\includegraphics[scale=0.45]{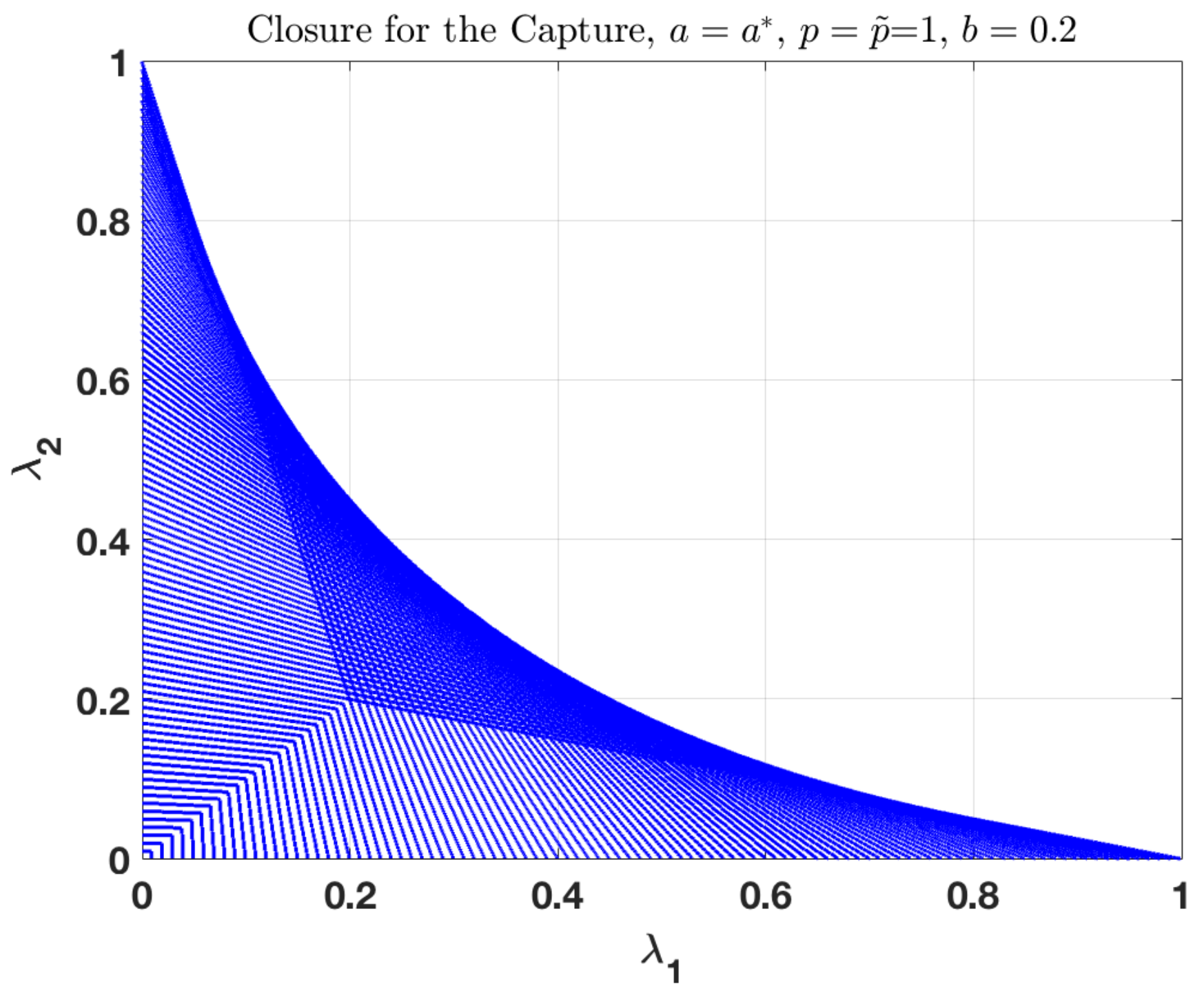}
\label{fig:STR_capture}
}
\subfigure[Capture channel with $\alpha \leq \alpha^{*}\leq1$.]{
\includegraphics[scale=0.45]{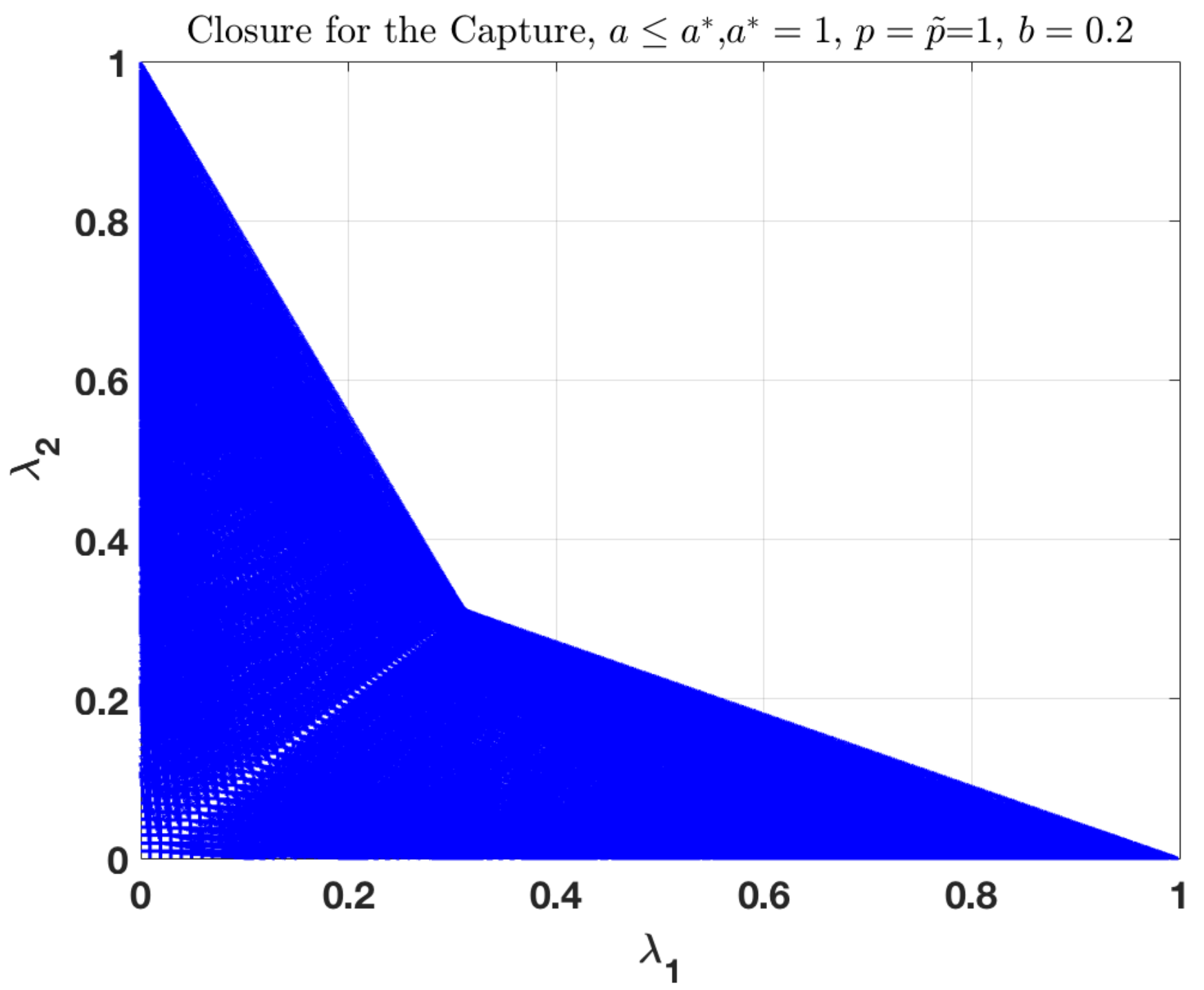}
\label{fig:STR_capture2}
}
\caption{Closure of the Stability Region for the capture channel ($b=0.2, c=0$) for $p=\tilde{p}=1$.}
\end{figure}

\subsection{Average Delay}
The effect of the arrival rate $\lambda$ at the average delay is depicted in Fig. \ref{delay_vs_lambda} for the collision, capture and the MPR channel models.
We consider the case with $\alpha=0.6$, $\alpha^{*}=1$ and $p=0.9$, $\tilde{p}=1$.
Clearly, regarding the MPR channel model, the lower and the upper bounds appear to be close. As also expected the average delay is lower for the MPR than the capture and the collision. As also expected finite delay can be sustained for larger values of $\lambda$ for the MPR case.

\begin{figure}[h!]
\centering
\subfigure[Effect of $\lambda$ on the average delay.]{
\includegraphics[scale=0.45]{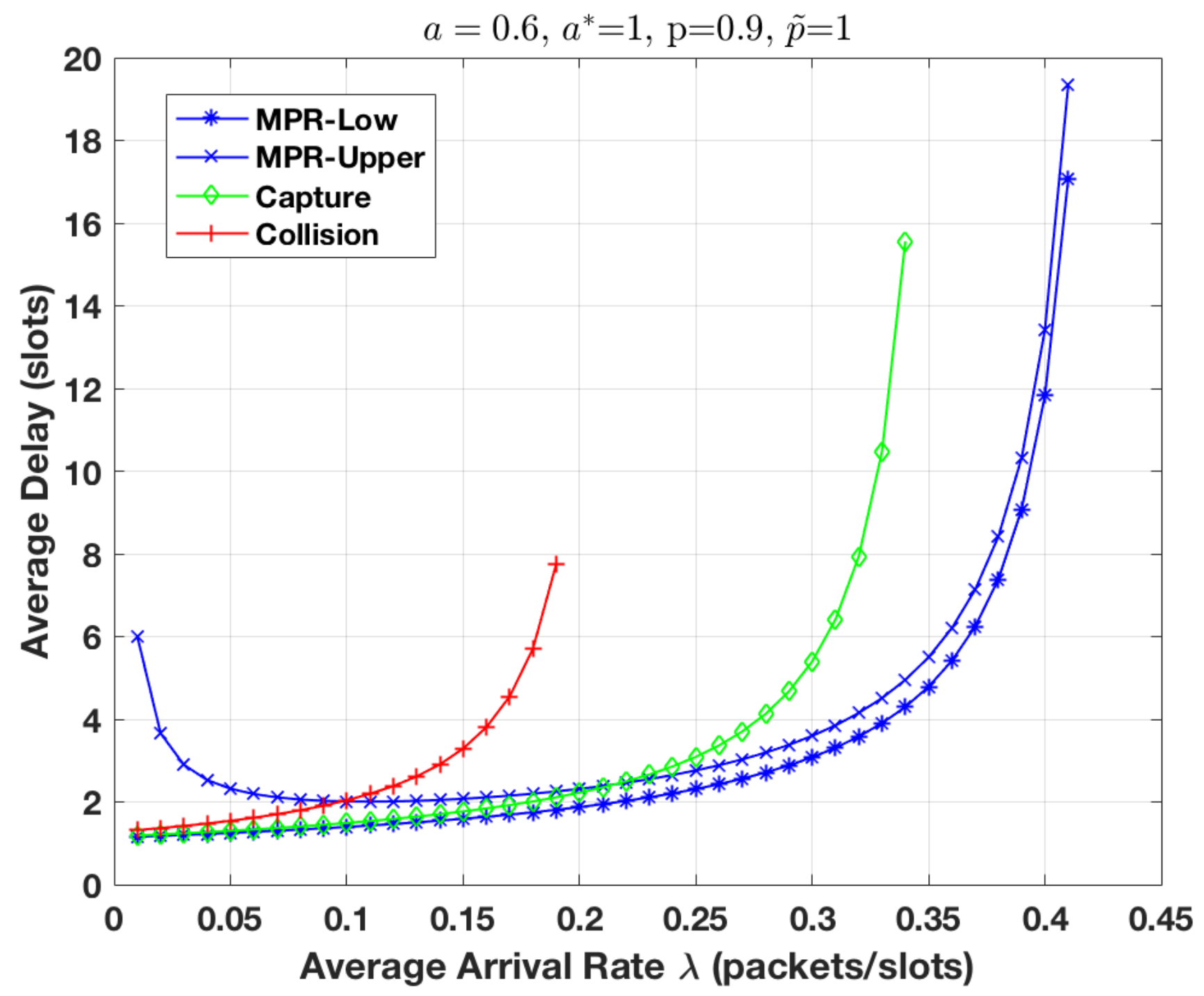}
\label{delay_vs_lambda}
}
\subfigure[Effect of $\alpha^{*}$ as $\lambda$ varies.]{
\includegraphics[scale=0.45]{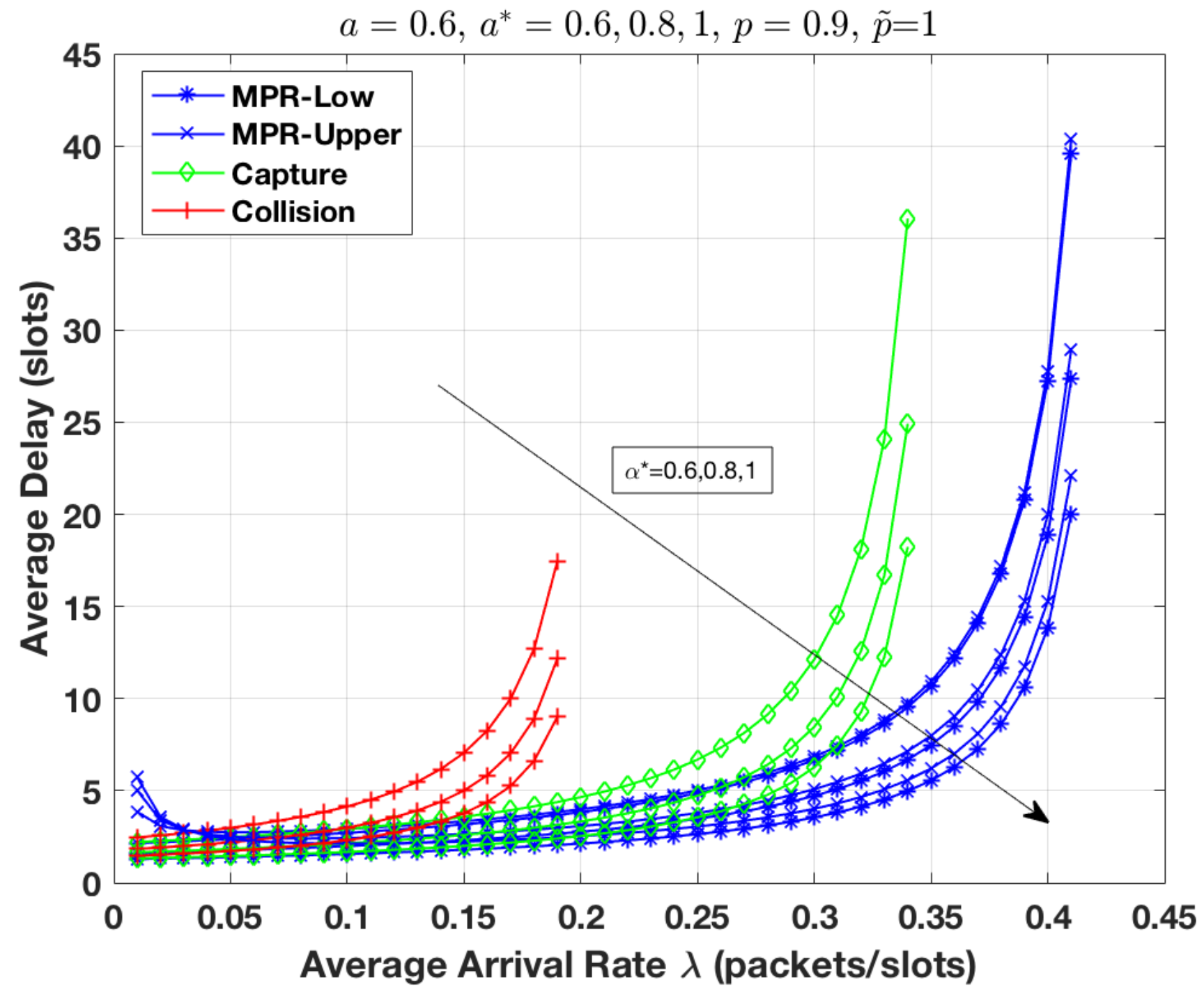}
\label{delay_vs_alphastar}
}
\caption{Effect of $\lambda$ on the average delay for the collision, capture and the MPR channel models.}
\end{figure}

In Fig. \ref{delay_vs_alphastar} we present the effect of $\alpha^{*}$ on the average delay as $\lambda$ varies. The cases of the collision, capture and the MPR channel models are presented. As $\alpha^{*}$ increases then average delay decreases and also the maximum arrival rate that can still maintain a finite delay is getting larger. Clearly, adapting the transmission probabilities depending on the queue state can increase the performance of the system.

\section{Conclusions}
In this work we considered the case of the two and three-user with bursty traffic in a random access wireless network with a common destination that employs MPR capabilities. We assumed that the users adapt their transmission probabilities based on the status of the queue of the other nodes. For this network we provided the stability region for the two and the three-user case. For the two-user case we provided the convexity conditions of the stability region. Furthermore, we provided a detailed mathematical analysis and derived the generating function of the stationary joint queue length distribution of user nodes in terms of the solution of a two boundary value problems. Based on that result we obtained expressions for the average queueing delay at each user node. For the two-user symmetric case with MPR we obtained a lower and an upper bound for the average delay without explicitly computing the generating function for the stationary joint queue length distribution. The bounds as shown in the numerical results appear to be tight. Explicit expressions for the average delay are obtained for the model with capture effect. Finally, we provided the optimal transmission probability in closed form expression that minimizes the average delay in the symmetric capture case.

\section*{Appendix}
\textbf{Intersection points of the curves:}
In the following we focus on the location of the intersection points of $R(x,y)=0$, $A(x,y)=0$ (resp. $B(x,y)$). These points (if exist) are potential singularities for the functions $H(x,0)$, $H(0,y)$, and thus, their investigation is crucial regarding the analytic continuation of  $H(x,0)$, $H(0,y)$ outside the unit disk. Note that the analytic continuation of $H(x,0)$ (resp. $H(0,y)$) outside the unit disc can be achieved by various methods (e.g., Lemma 2.2.1 and Theorem 3.2.3 in \cite{fay1}). 
\paragraph{Intersection points between $R(x,y)=0$, $A(x,y)=0$.}
Let $y\in \doubletilde{C}_{y}$ and $R(x,y) = 0$, $x = X_{\pm}(y)$. We can easily show that the resultant in $x$ of the two polynomials $R(x,y)$ and $A(x,y)$ is
\begin{displaymath}
\begin{array}{rl}
Res_{x}(R,A;y)=&y(y-1)Q(y),\\
Q(y)=&-\lambda_{2}d_{1}(d_{1}+(1+\lambda_{1})\alpha_{2}\widehat{\alpha}_{1})y^{2}+y\alpha_{2}\widehat{\alpha}_{1}[d_{1}(\lambda+\lambda_{1}\lambda_{2})\\&-\alpha_{1}^{*}\tilde{P}_{1/\{1\}}(d_{1}+\alpha_{2}\widehat{\alpha}_{1})]+(\alpha_{2}\widehat{\alpha}_{1})^{2}\alpha_{1}^{*}\tilde{P}_{1/\{1\}}.
\end{array}
\end{displaymath}

Note that $Q(0)=(\alpha_{2}\widehat{\alpha}_{1})^{2}\alpha_{1}^{*}\tilde{P}_{1/\{1\}}>0$ and $Q(1)=d_{1}[\lambda_{1}\alpha_{2}\widehat{\alpha}_{1}-\lambda_{2}d_{1}-\alpha_{2}\widehat{\alpha}_{1}\alpha_{1}^{*}\tilde{P}_{1/\{1\}}]>0$ due to the fact that $d_{1}<0$ and the stability condition (see Lemma \ref{lem}).

Similarly, for $x\in \doubletilde{C}_{x}$ and $R(x,y) = 0$, $y = Y_{\pm}(x)$. We can easily show that the resultant in $y$ of the two polynomials $R(x,y)$ and $A(x,y)$ is
\begin{displaymath}
\begin{array}{rl}
Res_{y}(R,A;x)=&x(x-1)\alpha_{2}\widehat{\alpha}_{1}Z(x),\\
Z(x)=&-\lambda_{1}(\alpha_{2}\widehat{\alpha}_{2}+(1+\lambda_{1})d_{1})x^{2}+x[(\lambda+\lambda_{1}\lambda_{2})d_{1}+(\alpha_{2}\widehat{\alpha}_{1}+d_{1})\alpha_{1}^{*}\tilde{P}_{1/\{1\}}]-\alpha_{1}^{*}\tilde{P}_{1/\{1\}}d_{1}.
\end{array}
\end{displaymath}
Note also that $Z(0)=-\alpha_{1}^{*}\tilde{P}_{1/\{1\}}d_{1}>0$ since $d_{1}<0$ and $Z(1)=\alpha_{2}\widehat{\alpha}_{1}\alpha_{1}^{*}\tilde{P}_{1/\{1\}}-\lambda_{1}\alpha_{2}\widehat{\alpha}_{1}+\lambda_{2}d_{1}>0$ due to the stability conditions (see Lemma \ref{lem}). If $\alpha_{1}^{*}\leq min\{1,\frac{\alpha_{2}\widehat{\alpha}_{1}+(1+\lambda_{2})\alpha_{1}\widehat{\alpha}_{2}}{(1+\lambda_{2})\tilde{P}_{1/\{1\}}}\}$, then $\lim_{x\to\infty}Z(x)=-\infty$, and $Z(x)=0$ has two roots of opposite sign, say $x_{*}<0<1<x^{*}$. If $\frac{\alpha_{2}\widehat{\alpha}_{1}+(1+\lambda_{2})\alpha_{1}\widehat{\alpha}_{2}}{(1+\lambda_{2})\tilde{P}_{1/\{1\}}}<\alpha_{1}^{*}\leq 1$, then $\lim_{x\to\infty}Z(x)=+\infty$, and $Z(x)=0$ has two positive roots , say $1<\tilde{x}_{*}<x_{3}<x_{4}<\tilde{x}^{*}$ (due to the stability conditions). In the former case we have to check if $x^{*}$ is in $S_{x}=G_{\mathcal{M}}\cap \{x:|x|>1\}$, while in the latter case if $\tilde{x}_{*}$ is in $S_{x}$. These zeros, if they lie in $S_{x}$ such that $|Y_{0}(x)|\leq1$, are poles of $H(x,y)$. Denote from hereon 
\begin{displaymath}
\bar{x}=\left\{\begin{array}{rl}
x^{*},&\alpha_{1}^{*}\leq min\{1,\frac{\alpha_{2}\widehat{\alpha}_{1}+(1+\lambda_{2})\alpha_{1}\widehat{\alpha}_{2}}{(1+\lambda_{2})\tilde{P}_{1/\{1\}}}\},\\
\tilde{x}_{*},&\frac{\alpha_{2}\widehat{\alpha}_{1}+(1+\lambda_{2})\alpha_{1}\widehat{\alpha}_{2}}{(1+\lambda_{2})\tilde{P}_{1/\{1\}}}<\alpha_{1}^{*}\leq 1.
\end{array}\right.
\end{displaymath}
\paragraph{Intersection points between $R(x,y)=0$, $B(x,y)=0$.} Let $y\in \doubletilde{C}_{y}$ and $R(x,y) = 0$, $x = X_{\pm}(y)$. It is easily shown that
\begin{displaymath}
\begin{array}{l}
R(x,y)=B(x,y)+\lambda_{1}(1-x)+\lambda_{2}(1-y)+\lambda_{1}\widehat{\lambda}_{2}(1-x)(1-y)+\alpha_{2}^{*}\tilde{P}_{2/\{2\}}(1-\frac{1}{y}). 
\end{array}
\end{displaymath}
Thus, $R(x,y)=0$, $B(x,y)=0$ implies that,
\begin{displaymath}
\begin{array}{rl}
\lambda_{1}xy(1-x)+\lambda_{2}xy(1-y)+\lambda_{1}\lambda_{2}xy(1-x)(1-y)+\alpha_{2}^{*}\tilde{P}_{2/\{2\}}x(y-1)=&0,\\
\alpha_{1}\widehat{\alpha}_{2}y(x-1)+d_{2}x(y-1)=&0.
\end{array}
\end{displaymath}
The second equation gives $x=\alpha_{1}\widehat{\alpha}_{2}y/(\alpha_{1}\widehat{\alpha}_{2}y+d_{2}(y-1))$. Substituting back in the first one yields,
$W(y)=\frac{y-1}{y(\alpha_{1}\widehat{\alpha}_{2}y+d_{2}(y-1))}S(y)$, where $S(y)=-\lambda_{2}(\alpha_{1}\widehat{\alpha}_{2}+d_{2}(1+\lambda_{1}))y^{2}+y(d_{2}(\lambda+\lambda_{1}\lambda_{1})+\alpha_{2}^{*}\tilde{P}_{2/\{2\}}(d_{2}+\alpha_{1}\widehat{\alpha}_{2}))-\alpha_{2}^{*}\tilde{P}_{2/\{2\}}d_{2}.$ Note that $S(0)=-\alpha_{2}^{*}\tilde{P}_{2/\{2\}}d_{2}>0$, $S(1)=\alpha_{1}\widehat{\alpha}_{2}\alpha_{2}^{*}\tilde{P}_{2/\{2\}}-\widehat{\lambda}_{2}\alpha_{1}\widehat{\alpha}_{2}+\widehat{\lambda}_{1}d_{2}>0$, due to the stability conditions. If $\alpha_{2}^{*}<min\left\{1,\frac{\alpha_{1}\widehat{\alpha}_{2}+(1+\lambda_{1})\alpha_{2}\widehat{\alpha}_{1}}{(1+\lambda_{1})\tilde{P}_{2/\{2\}}}\right\}$, then $\lim_{y\to \infty}S(y)=-\infty$, and $S(y)$ has two roots of opposite sign, say $y_{*}$, $y^{*}$ such that $y_{*}<0<1<y^{*}$, and $W(y)>0$ for $y\in(0,1)$, which in turn implies that $B(X_{0}(y),y)\neq0$, $y\in[y_{1},y_{2}]\subset(0,1)$, or equivalently $B(x,Y_{0}(x))\neq0$, $x\in\mathcal{M}$. In case $\frac{\alpha_{1}\widehat{\alpha}_{2}+(1+\lambda_{1})\alpha_{2}\widehat{\alpha}_{1}}{(1+\lambda_{1})\tilde{P}_{2/\{2\}}}<\alpha_{2}^{*}\leq1$, $\lim_{y\to \infty}S(y)=+\infty$, and $S(y)$ has two positive roots, say $\widehat{y}_{*}$, $\widehat{y}^{*}$ such that $1<\widehat{y}_{*}<y_{3}<y_{4}<\widehat{y}^{*}$, and $W(y)>0$ for $y\in(0,1)$, which in turn implies that $B(X_{0}(y),y)\neq0$, $y\in[y_{1},y_{2}]\subset(0,1)$, or equivalently $B(x,Y_{0}(x))\neq0$, $x\in\mathcal{M}$.

\bibliographystyle{ieeetr}
\bibliography{bibliography}

\begin{thebibliography}{10}

\bibitem{Abramson}
N.~Abramson, ``{THE ALOHA SYSTEM}: Another alternative for computer
  communications,'' in {\em Proceedings of the November 17-19, 1970, Fall Joint
  Computer Conference}, AFIPS '70 (Fall), (New York, NY, USA), pp.~281--285,
  ACM, 1970.

\bibitem{Laya2014}
A.~Laya, L.~Alonso, and J.~Alonso-Zarate, ``Is the random access channel of
  {LTE} and {LTE-A} suitable for {M2M} communications? a survey of
  alternatives,'' {\em IEEE Communications Surveys Tutorials}, vol.~16,
  pp.~4--16, First 2014.

\bibitem{METISComMag2014}
A.~Osseiran, F.~Boccardi, V.~Braun, K.~Kusume, P.~Marsch, M.~Maternia,
  O.~Queseth, M.~Schellmann, H.~Schotten, H.~Taoka, H.~Tullberg, M.~A.
  Uusitalo, B.~Timus, and M.~Fallgren, ``Scenarios for {5G} mobile and wireless
  communications: the vision of the {METIS} project,'' {\em IEEE Communications
  Magazine}, vol.~52, pp.~26--35, May 2014.

\bibitem{KoseogluTCOM2016}
M.~Koseoglu, ``Lower bounds on the {LTE-A} average random access delay under
  massive {M2M} arrivals,'' {\em IEEE Transactions on Communications}, vol.~64,
  pp.~2104--2115, May 2016.

\bibitem{LeungTWC2012}
J.~B. Seo and V.~C.~M. Leung, ``Performance modeling and stability of
  semi-persistent scheduling with initial random access in lte,'' {\em IEEE
  Transactions on Wireless Communications}, vol.~11, pp.~4446--4456, December
  2012.

\bibitem{PopovskiSPL2017}
Z.~Utkovski, O.~Simeone, T.~Dimitrova, and P.~Popovski, ``Random access in
  {C-RAN} for user activity detection with limited-capacity fronthaul,'' {\em
  IEEE Signal Processing Letters}, vol.~24, pp.~17--21, Jan 2017.

\bibitem{WangTSP2007}
H.~Wang and T.~Li, ``Hybrid {ALOHA}: A novel {MAC} protocol,'' {\em IEEE
  Transactions on Signal Processing}, vol.~55, pp.~5821--5832, Dec 2007.

\bibitem{PopovskiWCL2015}
Z.~Utkovski, T.~Eftimov, and P.~Popovski, ``Random access protocols with
  collision resolution in a noncoherent setting,'' {\em IEEE Wireless
  Communications Letters}, vol.~4, pp.~445--448, Aug 2015.

\bibitem{PopovskiTCOM2013}
C.~Stefanovic and P.~Popovski, ``{ALOHA} random access that operates as a
  rateless code,'' {\em IEEE Transactions on Communications}, vol.~61,
  pp.~4653--4662, November 2013.

\bibitem{AEUnion}
A.~Ephremides and B.~Hajek, ``Information theory and communication networks: an
  unconsummated union,'' {\em IEEE Transactions on Information Theory},
  vol.~44, pp.~2416--2434, Oct 1998.

\bibitem{TongSingProcMag2004}
L.~Tong, V.~Naware, and P.~Venkitasubramaniam, ``Signal processing in random
  access,'' {\em IEEE Signal Processing Magazine}, vol.~21, pp.~29--39, Sept
  2004.

\bibitem{KumarTMC2016}
Y.~Gao, C.~W. Tan, Y.~Huang, Z.~Zeng, and P.~R. Kumar, ``Characterization and
  optimization of delay guarantees for real-time multimedia traffic flows in
  {IEEE} 802.11 {WLAN}s,'' {\em IEEE Transactions on Mobile Computing},
  vol.~15, pp.~1090--1104, May 2016.

\bibitem{Rao_TIT1988}
R.~Rao and A.~Ephremides, ``On the stability of interacting queues in a
  multiple-access system,'' {\em IEEE Transactions on Information Theory},
  vol.~34, pp.~918--930, Sep 1988.

\bibitem{Tsybakov79}
B.~S. Tsybakov and V.~A. Mikhailov, ``Ergodicity of a slotted {ALOHA} system,''
  {\em Problemy Peredachi Informatsii}, vol.~15, p.~73–87, 1979.

\bibitem{szpa1}
W.~Szpankowski, ``Stability conditions for some distributed systems: buffered
  random access systems,'' {\em Advances in Applied Probability}, vol.~26,
  no.~2, pp.~498--515, 1994.

\bibitem{LuoAE1999}
W.~Luo and A.~Ephremides, ``Stability of {N} interacting queues in
  random-access systems,'' {\em IEEE Transactions on Information Theory},
  vol.~45, pp.~1579--1587, Jul 1999.

\bibitem{NawareTong2005}
V.~Naware, G.~Mergen, and L.~Tong, ``Stability and delay of finite-user slotted
  {ALOHA} with multipacket reception,'' {\em IEEE Transactions on Information
  Theory}, vol.~51, pp.~2636--2656, July 2005.

\bibitem{Bordenave_TIT2012}
C.~Bordenave, D.~McDonald, and A.~Proutiere, ``Asymptotic stability region of
  slotted {ALOHA},'' {\em IEEE Transactions on Information Theory}, vol.~58,
  pp.~5841--5855, Sept 2012.

\bibitem{Haenggi_TCOM2016}
Y.~Zhong, M.~Haenggi, T.~Q.~S. Quek, and W.~Zhang, ``On the stability of static
  poisson networks under random access,'' {\em IEEE Transactions on
  Communications}, vol.~64, pp.~2985--2998, July 2016.

\bibitem{BehrooziRaoTIT1992}
A.~B. Behroozi-Toosi and R.~R. Rao, ``Delay upper bounds for a finite user
  random-access system with bursty arrivals,'' {\em IEEE Transactions on
  Communications}, vol.~40, pp.~591--596, Mar 1992.

\bibitem{GeorgiadisJSAC87}
L.~Georgiadis, L.~Merakos, and P.~Papantoni-Kazakos, ``A method for the delay
  analysis of random multiple-access algorithms whose delay process is
  regenerative,'' {\em IEEE Journal on Selected Areas in Communications},
  vol.~5, pp.~1051--1062, Jul 1987.

\bibitem{StamatiouHaenggi2010}
K.~Stamatiou and M.~Haenggi, ``Random-access poisson networks: Stability and
  delay,'' {\em IEEE Communications Letters}, vol.~14, pp.~1035--1037, November
  2010.

\bibitem{WangTong2010}
S.~Wang, J.~Zhang, and L.~Tong, ``Delay analysis for cognitive radio networks
  with random access: A fluid queue view,'' in {\em IEEE International
  Conference on Computer Communications}, pp.~1--9, March 2010.

\bibitem{ProutiereITC2011}
N.~Bouman, S.~Borst, J.~van Leeuwaarden, and A.~Proutiere, ``Backlog-based
  random access in wireless networks: Fluid limits and delay issues,'' in {\em
  23rd International Teletraffic Congress (ITC)}, pp.~39--46, Sept 2011.

\bibitem{HaenggiAllerton2010}
S.~Srinivasa and M.~Haenggi, ``Throughput-delay-reliability tradeoffs in
  multihop networks with random access,'' in {\em 48th Annual Allerton
  Conference on Communication, Control, and Computing (Allerton)},
  pp.~1117--1124, Sept 2010.

\bibitem{HaenggiTMC2012}
S.~Srinivasa and M.~Haenggi, ``A statistical mechanics-based framework to
  analyze ad hoc networks with random access,'' {\em IEEE Transactions on
  Mobile Computing}, vol.~11, pp.~618--630, April 2012.

\bibitem{PoloczekInfocom2015}
F.~Poloczek and F.~Ciucu, ``Service-martingales: Theory and applications to the
  delay analysis of random access protocols,'' in {\em IEEE Conference on
  Computer Communications (INFOCOM)}, pp.~945--953, April 2015.

\bibitem{PappasTWC2015}
N.~Pappas, M.~Kountouris, A.~Ephremides, and A.~Traganitis, ``Relay-assisted
  multiple access with full-duplex multi-packet reception,'' {\em IEEE
  Transactions on Wireless Communications}, vol.~14, pp.~3544--3558, July 2015.

\bibitem{TanJIOT2017}
X.~Jian, Y.~Liu, Y.~Wei, X.~Zeng, and X.~Tan, ``Random access delay
  distribution of multichannel slotted {ALOHA} with its applications for
  machine type communications,'' {\em IEEE Internet of Things Journal}, vol.~4,
  pp.~21--28, Feb 2017.

\bibitem{ChenWoWMoM2016}
Z.~Chen, N.~Pappas, M.~Kountouris, and V.~Angelakis, ``Throughput analysis of
  smart objects with delay constraints,'' in {\em IEEE 17th International
  Symposium on A World of Wireless, Mobile and Multimedia Networks (WoWMoM)},
  pp.~1--6, June 2016.

\bibitem{StamatiouTON2014}
K.~Stamatiou and M.~Haenggi, ``Delay characterization of multihop transmission
  in a poisson field of interference,'' {\em IEEE/ACM Transactions on
  Networking}, vol.~22, pp.~1794--1807, Dec 2014.

\bibitem{QuekTWC2016}
Q.~Yao, A.~Huang, H.~Shan, T.~Q.~S. Quek, and W.~Wang, ``Delay-aware wireless
  powered communication networks energy balancing and optimization,'' {\em IEEE
  Transactions on Wireless Communications}, vol.~15, pp.~5272--5286, Aug 2016.

\bibitem{SandgrenTCOM2017}
E.~Sandgren, A.~G. i~Amat, and F.~Br\"annstr\"om, ``On frame asynchronous coded
  slotted {ALOHA}: Asymptotic, finite length, and delay analysis,'' {\em IEEE
  Transactions on Communications}, vol.~65, pp.~691--704, Feb 2017.

\bibitem{szpa}
W.~Szpankowski, ``Stability conditions for multidimensional queueing systems
  with computer applications,'' {\em Operations Research}, vol.~36, no.~6,
  pp.~944--957, 1988.

\bibitem{coh}
J.~Cohen and O.~Boxma, {\em Boundary value problems in queueing systems
  analysis}.
\newblock Amsterdam, Netherlands: North Holland Publishing Company, 1983.

\bibitem{fay}
G.~Fayolle, R.~Iasnogorodski, and V.~Malyshev, {\em Random walks in the
  quarter-plane: Algebraic methods, boundary value problems and applications,
  volume 40 of Applications of Mathematics}.
\newblock Springer-Verlag, Berlin, 1999.

\bibitem{nain}
P.~Nain, ``Analysis of a two-node aloha-network with infinite capacity
  buffers,'' in {\em Int. Seminar on Computer Networking and Performance
  Evaluation}, pp.~49--63, September 1985.

\bibitem{fay1}
G.~Fayolle and R.~Iasnogorodski, ``Two coupled processors: The reduction to a
  riemann-hilbert problem,'' {\em Zeitschrift f{\"u}r
  Wahrscheinlichkeitstheorie und Verwandte Gebiete}, vol.~47, no.~3,
  pp.~325--351, 1979.

\bibitem{avr}
K.~Avrachenkov, P.~Nain, and U.~Yechiali, ``A retrial system with two input
  streams and two orbit queues,'' {\em Queueing Systems}, vol.~77, no.~1,
  pp.~1--31, 2014.

\bibitem{tse}
D.~Tse and P.~Viswanath, {\em Fundamentals of wireless communication}.
\newblock New York, NY, USA: Cambridge University Press, 2005.

\bibitem{clau_tcom1990}
C.~T. Lau and C.~Leung, ``Capture models for mobile packet radio networks,''
  {\em IEEE Transactions on Communications}, vol.~40, pp.~917--925, May 1992.

\bibitem{b:Loynes}
R.~Loynes, ``The stability of a queue with non-independent inter-arrival and
  service times,'' {\em Proc. Camb. Philos. Soc}, vol.~58, no.~3, pp.~497--520,
  1962.

\bibitem{capture1}
C.~T. Lau and C.~Leung, ``Capture models for mobile packet radio networks,''
  {\em IEEE Transactions on Communications}, vol.~40, pp.~917--925, May 1992.

\bibitem{capture2}
M.~Zorzi and R.~R. Rao, ``Capture and retransmission control in mobile radio,''
  {\em IEEE Journal on Selected Areas in Communications}, vol.~12,
  pp.~1289--1298, Oct 1994.

\end{thebibliography}
\end{document}